\documentclass[12pt]{article} 
\usepackage{amsfonts, amssymb, amsmath, amsthm, tikz-cd, xcolor, mathrsfs} 
\usepackage[mathscr]{eucal}
\usepackage[utf8]{inputenc}

\swapnumbers
\newtheorem{theorem}{Theorem}[section] 
\newtheorem{lemma}[theorem]{Lemma} 
\newtheorem{corollary}[theorem]{Corollary} 
\newtheorem{proposition}[theorem]{Proposition} 

\theoremstyle{definition} 
\newtheorem{definition}[theorem]{Definition} 
\newtheorem{remark}[theorem]{Remark} 
\newtheorem{remarks}[theorem]{Remarks}
\newtheorem{example}[theorem]{Example} 
\newtheorem{examples}[theorem]{Examples}

\newcommand{\blue}{\color{blue}}

\newcommand{\1}{{\mathbf 1}}
\newcommand{\A}{{\mathcal A}}
\newcommand{\Aff}{\mbox{Aff}}
\newcommand{\B}{{\mathcal B}}
\newcommand{\co}{~\mbox{co}~}
\newcommand{\Cat}{{\mathcal C}}
\newcommand{\D}{{\mathcal D}}
\newcommand{\Dacey}{{\mathbb D}}
\newcommand{\E}{{\boldsymbol A}}

\newcommand{\Ev}{{\mathcal E}}
\newcommand{\F}{{\boldsymbol B}}

\newcommand{\G}{\boldsymbol{G}}
\newcommand{\Graph}{\mbox{Graph}}
\renewcommand{\H}{\mathscr{H}}
\renewcommand{\hat}{\widehat}

\renewcommand{\iff}{\mbox{ iff }}
\newcommand{\J}{\boldsymbol{\mathscr J}}
\renewcommand{\k}{{\mathbb K}}
\newcommand{\K}{\mathscr{K}}
\newcommand{\M}{{\mathcal M}}
\newcommand{\Markov}{\mbox{\bf Markov}}
\newcommand{\mintensor}{\otimes_{\mbox{\small min}}}
\newcommand{\maxtensor}{\otimes_{\mbox{\small max}}}
\newcommand{\Mod}{\mbox{Mod}}
\newcommand{\N}{{\mathbb N}}
\newcommand{\Ntest}{{\mathcal N}}
\newcommand{\Ob}{{\mathcal O}}
\newcommand{\op}{\mbox{\small op}}
\newcommand{\OUS}{\mbox{\bf OUS}}
\newcommand{\Pow}{{\mathcal P}}
\newcommand{\Prob}{\mbox{\bf Prob}}

\newcommand{\R}{{\mathbb R}}
\newcommand{\ran}{\mbox{ran}}
\newcommand{\sa}{\mbox{sa}}
\newcommand{\spn}{\mbox{span}}
\newcommand{\Test}{\mbox{\bf Test}}

\newcommand{\tempout}[1]{{}}
\renewcommand{\tilde}{\widetilde}
\renewcommand{\u}{u} 
\newcommand{\V}{{\mathbf V}}

\title{\sf Order-unit spaces and probabilistic models} 
\author{\sf John Harding\footnote{Department of Mathematics and Statistics, New Mexico State University} \footnote{JH acknowledges the support of NSF grant DMS-2231414.} and Alex Wilce\footnote{Department of Mathematics and Computer Science, Susquehanna University}}
\date{}
\begin{document} 

\maketitle

\begin{abstract}  We exhibit a functor from the category of order-unit spaces and positive, unit-preserving mappings into the category $\Prob$ of probabilistic models (test spaces with designated state spaces) and morphisms thereof. Restricted to any subcategory of OUS monoidal with respect to a bilinear composition rule, our functor is also monoidal. This much shows that the convex-operational approach to physical theories can be subsumed by the test-space approach, without resort to ``generalized test spaces''.  A second, less functorial, construction, equipping a probabilistic model with tests representing ``weighted coins'', also sheds light on the nature of unsharp observables. 

\end{abstract}

\section{Introduction} 
The novel probabilistic features of quantum mechanics have inspired attempts to generalize classical probability theory so as to allow for random quantities that are not jointly measurable. Within such a framework, one then hopes to characterize (or {\em reconstruct}) quantum mechanics in some probabilistically meaningful way.  Given this aim, it is important that the probabilistic framework not presuppose any linear or $\ast$-algebraic structure not already implicit in basic probabilistic ideas. 

Broadly speaking, there have been two approaches to constructing such a general framework. One, going back to Mackey \cite{Mackey} and further 
developed by Foulis and Randall \cite{FREL, FR-Dirac, FGR-II}, begins with a {\em test space}: a catalogue $\M$ of discrete, classical experiments or measurements (or rather, their outcome-sets); states are defined in terms of consistent assignments of probabilities to the outcomes of these, and the rest of the theory is built up on this basis. 
The other, {\em convex-operational} approach \cite{Davies-Lewis, Edwards, Ludwig, Mielnik}, begins with a set $\Omega$ of possible 
states --- a {\em state-space} --- which is assumed to be convex, to reflect the possibility of randomizing the preparation of states. 
Dually,  one can begin with an OUS $(\E,u)$, representing states by positive affine functionals $\alpha$  on $\E$ with $u(\alpha) = 1$, and 
taking elements $a \in \E$ with $0 \leq a \leq u$ --- usually 
termed {\em effects} --- to represent measurement outcomes. An 
experiment is represented by a {\em discrete observable}, 
which is a sequence $(a_1,...,a_n)$ of non-zero effects summing to $u$.


The purpose of this note is to clarify and formalize the connection between these two broad approaches. In one direction, this is straightforward. If $\Omega$ is a convex set of probability weights 
on a test space $\M$, then $\Omega$ generates a cone-base space $\V(A)$, which in many cases --- in particular, whenever $\Omega$ is pointwise 
compact --- is a complete base-normed space \cite{Wilce-GPT}, and every test in $\M(A)$ defines a discrete observable with values in $\V(A)^{\ast}$.  We can set up a category, $\Prob_{c}$, of probabilistic models having compact, convex state spaces, and then this construction defines a faithful functor $\Prob_{c} \rightarrow \OUS$.  

The route in the other direction seems not to have been described in the literature. 
The usual  analogue of a ``test" in the convex-operational approach is a discrete observable, i.e., a mapping $a : I \rightarrow (0,u]$ with $\sum_{i\in I} a_i = u$, where $I$ is a finite set. Where $I = \{1,...,n\}$, this is simply a {\em list} $(a_1,...,a_n)$ of effects $a_i \in \E$ with $\sum_{i=1}^{n} a_i = u$. Such a list may contain repetitions --- indeed, $(\tfrac{1}{n} u,...,\tfrac{1}{n}u)$ is a perfectly good $\{1,...,n\}$-valued observable. This has led some authors \cite{Pulmannova-Wilce, Gudder} to consider various kinds of generalized test spaces in which outcomes are permitted to have ``multiplicities" greater than $1$. However, as we shall see, no such generalization is necessary. The key observation is that, given a discrete observable $a : I \rightarrow \E$, its {\em graph} ---  the set of pairs $(i,a_i)$ with $i \in I$ --- is the appropriate model of the corresponding experiment or test. 

Accordingly, we construct, for every OUS $\E$ and every collection $\J$ of index sets, a probabilistic model $\Mod^{\J}(\E)$ consisting of graphs of $J$-valued observables, $J \in \J$ and show that this provides a functor from $\OUS$ to $\Prob_{c}$. When $\J$ is reasonably large, this is faithful. If $\J$ is closed under Cartesian products and $\Cat$ is any sub-category of $\OUS$ equipped with a bilinear composition rule making $\Cat$ a symmetric monoidal category, $\M^{\J}$, as restricted to $\Cat$, is a monoidal functor.  In this sense, the convex-operational framework is a special case of that based on tests.  



This paper is organized as follows. Section 2 develops some useful background information on vector-valued weights on test spaces. Section 3 introduces the $\M^{\J}$ construction, and explores some of its basic properties, while section 4 establishes that $\M^{\J}$ has the claimed functorial properties. Appendix A contains the proof of a minor but necessary result about order unit spaces.  Appendix B discusses how these ideas play out in the classical case in which $\E$ is the space of bounded measurable functions on some measurable space. 
Appendix C contains details concerning the monoidality of the $(~\cdot~)^{\J}$ functor. This includes a brief review of non-signalling composites of probabilistic models. Appendix D discusses an alternative, albeit less functorial, construction that provides some further operational insight into ``unsharp'' observables.  



\section{$\E$-valued weights on Test Spaces} 

It will be useful to start with a general discussion of ``probability weights'' on a test space taking values in an order-unit space.  
We assume from now on that the reader is more or less familiar with basic terminology regarding order-unit spaces (see, e.g., \cite{Alfsen} or \cite{Alfsen-Shultz}) and with test spaces, probabilistic models (test spaces 
with distinguished state spaces), and their morphisms (see, e.g., \cite{Wilce-GPT}).  We recall that any compact convex set $K$ generates 
a canonical base-normed Banach space $\V(K)$; the dual of this 
is an order unit space isomorphic to the space $\Aff(K)$ of 
bounded affine functionals on $K$. The space $\Aff_{c}(K)$ of 
continuous affine functionals on $K$ is also an OUS, with 
$\V(K) \simeq \Aff_{c}(K)^{\ast}$. In particular, $K$ is 
isomorphic to the state space of $\Aff_{c}(K)$ \cite{Alfsen}. 

We also recall that a {\em probability weight} on a test space 
$\M$ with outcome-space $X = \bigcup \M$ is a function 
$\alpha : X \rightarrow \R$ with $0 \leq \alpha(x)$ for all $x \in X$ 
and $\sum_{x \in E} \alpha(x) = 1$ for all tests $E \in \M$. 
A {\em probabilistic model} is a pair $(\M,\Omega)$ consisting 
of a test space and a distinguished set $\Omega$ of probability 
weights. In this paper, we will always assume that $\Omega$ 
is convex and compact in $\R^{X}$. 


In what follows, $\M$ is a test space with outcome-set $X$ and $\E$ is an order-unit space with order-unit $u$, positive cone $\E_+$, state-space $S(\E)$, and dual base-normed space $\E^{\ast} =: \V$. We write $[0,u]$ for the set of {\em effects}, elements $a \in \E$  with $0 \leq a \leq u$. 

\begin{definition} A {\em normalized $\E$-valued weight} on $\M$ is a mapping $F : X \rightarrow \E_{+}$ satisfying 
\[\sum_{x \in E} F(x) = u\]
for every test $E \in \M$. 
\footnote{Where $E$ is infinite, the sum can be understood as a norm limit, or as a weak limit of finite
partial sums. Since every term $F(x)$ is positive and bounded by $u$, these coincide, and the sum is simply the supremum of these partial sums.} 
\end{definition}.

In what follows, we will be concerned almost exclusively with normalized weights, and therefore drop the adjective ``normalized'', taking it as understood. Such weights could also be called {\em representations} of $\M$ on (or in) $\E$, or possibly even {\em $\M$-valued observable} on $\E$, for reasons spelled out below (see Examples (1) and (2). 

\begin{examples}
(1) If $\E = \B_{sa}(\H)$ for a Hilbert space $\H$, with the usual ordering and order-unit $\1$, an $\E$-valued weight could also be described as a ``positive-operator  valued weight'' on $\M$. In the special case in which $\M$ is the space of countable partitions of a measurable space $(S,\Sigma)$ by $\Sigma$-measurable  sets, we recover the usual definition of a POVM on $\H$.   In this case, it would also be standard to say that $F$ is an $S$-valued observable on $\H$, suggesting  the alternative terminology mentioned above. 


(3) More generally, suppose $\M$ is the test space $\M(\B)$ of countable partitions of the unit in any Boolean algebra $\B$, with $X = \B \setminus \{0\}$. Then an $\E$-valued weight $F : X \rightarrow \E$ extends uniquely to a positive, $\E$-valued measure on $\B$ --- also denoted by $F$ --- simply by setting $F(0_{\B}) = 0 \in \E$. Conversely, every normalized, positive $\E$-valued measure on $\B$ defines an $\E$-valued weight on $\B$. We will refer to a state of this kind as {\em classical} observable on $\E$. 
 
(3) If $\V = \V(\Omega)$ for some set $\Omega$ of probability weights, on $\M$, then  every outcome of $X$ gives rise to an evaluation-functional defined by $\hat{x} : \alpha \mapsto \alpha(x)$ for all $\alpha \in \V$, and $F : x \mapsto \hat{x}$ is a  $\V^{\ast}$-valued weight on $\M$. We call this the {\em canonical} $\V^{\ast}$-valued weight  associated with the model $A = (\M,\Omega)$.  

(4) Let $K$ be a compact convex set, with corresponding complete base-normed space $\V(K)$.  For any probabilistic model $(\M,\Omega)$ and affine mapping $\phi : K \rightarrow \Omega$, we have a $\V(K)^{\ast}$-valued state on $\M$ given by $F = \phi^{\ast} : \bigcup \M \rightarrow \Aff(K) \simeq \V(K)^{\ast}$ given by $F(x)(k) = \phi(k)(x)$.

(5) If $(\E,u)$ is any OUS, let $\D(\E)$ be the collection of 
{\em decompositions of the unit}, i.e., finite sets $E \subseteq (0,u]$ 
with $\sum E = u$. This is a test space, with outcome-set 
$\bigcup \D(\E) = (0,u]$; every state on $\E$ 
restricted to $(0,u]$, is a probability weight $\D(\E)$, and one can show that every probability weight on $\D(\E)$ extends to a state on $\E$ (\cite{Wilce-GPT}). Evidently, the inclusion mapping $(0,u]$ is an $\E$-valued weight on $\D(\E)$. 

\end{examples} 

Returning to the general case, let $F$ be an $\E$-valued weight on a test space $\M$, and suppose $\phi \in S(\E)$.   Then $F^{\ast}(\phi) := \phi \circ F$ is a (real-valued) probability weight on $\M$. Thus, we have a probabilistic model $A = (\M,F^{\ast}(\Omega))$. The mapping $\phi \mapsto F^{\ast}(\phi)$ extends to a positive linear mapping $F^{\ast} : \E^{\ast} \rightarrow \V(\Omega)$, and this in turn gives us a positive linear mapping $F^{\ast \ast} : \V(\Omega)^{\ast} \mapsto \E^{\ast \ast}$. Letting $F' : x \mapsto \hat{x}$ be the canonical $\V(A)^{\ast}$-valued state defined in Example (3) above, we now have 
\[F^{\ast \ast}(F'(x))(\alpha) = F'(x)(F^{\ast}(\alpha)) = F^{\ast}(\alpha)(x) = \alpha(F(x))\]
for every $x \in X$. Since $S(\E)$ separates points of $\E$, it follows that 
\[F^{\ast \ast} \circ F' = F.\]
In other words,  every $\E$-valued weight on $\M$ factors into (i) the canonical $\V(\Omega)$-valued weight $F'$ of a model $A = (\M,\Omega)$, followed by (ii) a positive, normalized linear mapping $\V^{\ast}(\Omega) \rightarrow \E$. 

\begin{remarks} 
(1) In general, the set $F^{\ast}(S(\E))$ of probability weights on $\M$ arising from a given weight $F$ will not be separating for outcomes of $\M$. An extreme case of this occurs when, for some $x \not = y$ in $X$, $F(x) = F(y)$. 

(2) We can define an $\E$-valued weight on a probabilistic model $(\M,\Omega)$ to be an $\E$-valued weight $F$ on $\M$ such that $F^{\ast}(S(\E)) \subseteq \Omega$. This is very close to the definition of a morphism of models, except that $\E$ is not a probabilistic model. It does not help to replace $\E$ with the model $(\D(\E),S(\E))$ where 
$\D(E)$ is the test space of unordered partitions of unity 
in $\E$: where $F(x) = F(y)$ for some  pair of outcomes with 
$x \perp y$, $F$ is not locally injective, and thus not a morphism of test spaces. 
\end{remarks}

\tempout{
{\bf 1.4 Some Questions:} 

(i) Can we characterize the space $\Pr(\M(\B,\E))$? 

(ii) Let $\Delta_{\E}(\B)$ denote the set 
of finitely-additive probability measures on $\B$ having the form $F^{\ast}(f)$ where $f$ is a state on $\E$ and $F$ is 
an $\E$-valued weight on $\B$. Note that for $\E = \R$, 
$\Delta_{\E}(\B)$ is just the set of all f.a. probability measures on $\B$.  Is this still true if $\E$ is any 
f.d. OUS? What about for an arbitrary OUS?

(iii) If we fix an observable $F \in \Ob(\B,\E)$, 
we can consider the set $\M(F) \subseteq \M(\B,\E)$ consisting 
of all tests arising from this fixed $F$. That is, 
$\M(F)$ consists of sets $\{F(b_1),...,(b_n)\}$ where $\{b_1,...,b_n\}$ is a partition of unity in $\B$. Is 
$\M(F)$ a refinement ideal? 

(iv) When, if ever, is $\M(\B,\E)$ algebraic?\\
}

\section{Test spaces associated with an OUS}

A very natural way of attaching a test space to an OUS $(\E,u)$ is the following: let $\D(A)$ denote the set of all decompositions of the unit, that is, finite 
subsets $E \subseteq (0,u]$ summing to $u$. This is a test space, with outcome-set $\bigcup \D(\E) = (0,u]$.  Any state 
on $\E$, restricted to $(0,u]$, defines a probability weight 
on $\D(\E)$, and one can show \cite{Wilce-GPT} that conversely, 
every probability weight on $\D(\E)$ extends uniquely to a state  on $\E$.  If $(a_1,...,a_n)$ is an observable with distinct  terms, then $\{a_1,...,a_n\}$ belongs to $\D(\E)$; 
however, $\D(\E)$ does not capture observables involving repeated effects. A related point is that if $\Phi : \E \rightarrow \F$  is a unit-preserving positive mapping, then unless  $\Phi$ is injective, it does not give rise to a morphism  $\D(\E) \rightarrow \D(\F)$.

Let $I$ be a non-empty set. An $I$-valued {\em observable} on an OUS $\E$ is simply an $I$-indexed partition of unity, that is, a function $f : I \rightarrow (0,u]$ with $\sum_{x \in I} f(x) = u$.  (Note here that we do not allow $f(x) = 0$.) If $I = \{1,...,n\}$, we recover the idea of a discrete observable as a list of effects summing to $u$. We will refer to such a list as an {\em ordered} observable. 

If $f : I \rightarrow E$ is an $I$-valued observable, let $G(f)$ denote its graph, i.e., the set $\{(x,a) | x \in I, f(x) = a > 0\}$. We propose to consider this graph as the outcome-set for the experiment of actually measuring $f$. Let $\Ob(I,\E)$ denote the set of all graphs of $I$-indexed observables on $\E$. It  is not hard to see that $\Ob(I,\E)$ is irredundant, and therefore, a test space, with outcome-set $\bigcup\Ob(I,\E) = I \times (0,u]$. 

Operationally, for a given $f \in \Ob(I,\E)$, the effect $f(x) = a$ is a mathematical model of the physical event, {\em relating to the target system}, corresponding to the directly observable experimental outcome $x$. Distinct pairs $(x,a)$ and $(y,a)$, where $x, y \in I$, can arise when two distinct experimental outcomes can be triggered by the same event in the system. A distinct pair $(x,a)$ and $(x,b)$ can arise as outcomes of distinct experiments that happen to ``label" distinct physical events in the same way. 

The structure of $\Ob(I,\E)$ is quite large, and combinatorially very rich, even if $I$ is small and $E$ has low dimension. As an illustration, let $I = \{1,2,3\}$ and let $\E = [0,1]$. Then $\Ob(I,\E)$ consists of graphs of triples $(a,b,c)$ of positive real numbers summing to $1$. Geometrically, these are the points in the interior 
of the standard simplex in $\R^{3}$. Each such triple 
corresponds to a three-outcome test, where we identify $(a,b,c)$ with the set $\{(1,a), (2,b), (3,c)\}$. If $a,b,c$ are distinct, any permutation of the entries gives a distinct test; however, these overlap in a complex way. Fixing attention on a single triple of distinct constant $a$, $b$, $c$, one finds that the sub-test space consisting of graphs of the six triples corresponding to permutations of $(a,b,c)$ can be organized as the rows and columns of a $3 \times 3$ grid \cite{BW-Foils}.  The test space $\Ob(I,E)$ will contain a copy of this six-test 
arrangement for every distinct triple $(a,b,c)$. Moreover, these will overlap; e.g., $(a,u,v)$ and $(a,b,c)$ intersect at $(1,a)$. 

The basic idea now is to bundle together all $I$-indexed observables, for arbitrary (say) finite sets $I$. Of course, we can't do quite this, as the latter form a proper class. But nothing prevents us from considering large collections of finite index sets. 

\begin{definition} If $\J$ is any set of finite sets, let 
\[\M^{\J}(\E) \ =  \ \bigcup_{J \in \J} \Ob(J,\E).\] 
When $\J = \{\{1,...,n\} \mid n \in \N\}$, $\M^{\J}$ consists of the graphs of all ordered observables on $\E$. In this case, we write $\M^{o}(\E)$ for $\M^{\J}(\E)$. 
\end{definition} 
Every state $f$ on $\E$ gives rise to a probability weight on $\M^{\J}(\E)$, given by $\alpha_{f}(j,a) = f(a)$ for all $j \in \bigcup \J$ and $a \in (0,a]$. Letting $\Omega^{\J}(\E) = \{ \alpha_f \mid f \in S(\E)\}$, we have a probabilistic model associated with any OUS $\E$ and any collection $\J$, namely \[\Mod^{\J}(\E) = (\M^{\J}(\E), \Omega^{\J}(\E)).\] 
In order to reduce notation, whenever possible 
we henceforth identify $\Omega^{\J}(\E)$ with $S(\E)$.

In many cases, $\Omega^{\J}(\E) = \Pr(\M^{\J}(\E))$. That is, all probability weights on $\M^{\J}(\E)$ have the form $\alpha_f$ for a unique state on $E$. For the following, let $\Ev(\J)$ denote the set of all subsets of index sets in $\J$; that is, $\Ev(\J) = \bigcup_{J \in \J} \Pow(J)$. If $\M$ is any test space, elements of $\Ev(\M)$ are 
called {\em events} of $\M$. Events $a, b \in \Ev(\M)$ are 
{\em complementary}, written $a \co b$, iff $a \cap b = \emptyset$ and $a \cup b \in \M$. Thus, $\emptyset \co E$ for 
any test $E \in \M$, and non-empty events are complements, or complementary, iff they partition some test. When two events $a$ and $b$ share a complement, they are said to be {\em perspective}, written 
$a \sim b$. It is easy to see that in this case $\alpha(a) = \alpha(b)$ for all probability weights $\alpha$ on $\M$.

\begin{lemma} Suppose that for all $i, j \in \bigcup \J$, there exists some $r \in \bigcup \J$ such that $\{i,r\}, \{j,r\} \in \Ev(\J)$. Then every probability weight on $\M^{\J}(\E)$ has the form $\tilde{\alpha}$ for a state on $\E$.
\end{lemma}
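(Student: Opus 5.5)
Let $\alpha$ be a probability weight on $\M^{\J}(\E)$, with outcome-set $\bigcup\J \times (0,u]$. The plan has three steps: show that $\alpha(i,a)$ does not depend on the index $i$; show that the resulting function of $a$ is a probability weight on $\D(\E)$; then apply the extension result of \cite{Wilce-GPT} quoted above, which says every probability weight on $\D(\E)$ extends uniquely to a state on $\E$.

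\textbf{Step 1: independence of the index.} Fix $a \in (0,u]$ and $i,j \in \bigcup\J$. By hypothesis there is $r$ with $\{i,r\} \subseteq J_1$ and $\{j,r\} \subseteq J_2$ for some $J_1,J_2 \in \J$. We may take $r \neq i$ and $r \neq j$; the degenerate cases where $r$ coincides with one of them, or $i=j$, need a separate check or a harmless reading of the hypothesis.

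Suppose first that $a \neq u$. Choose an observable $f$ on $J_1$ with $f(i)=a$ and with $f$ summing to $u-a$ on $J_1\setminus\{i\}$, spreading $u-a$ evenly over those indices. Choose $g$ on $J_2$ with $g(j)=a$ in the same way. Then
\[\{(i,a)\}\ \co\ G(f)\setminus\{(i,a)\}, \qquad \{(j,a)\}\ \co\ G(g)\setminus\{(j,a)\}.\]
To link the two, use $r$. Consider the observable $h$ on $J_1$ with $h(i)=a$, $h(r)=u-a$ and $h=0$ elsewhere. This last requirement is not allowed, since observables take values in $(0,u]$. To get around this, pad with small multiples of $u$: set $h(r)=u-a-\epsilon(|J_1|-2)u$ and $h(k)=\epsilon u$ for the remaining $k$, with $\epsilon$ small enough that $h(r)>0$. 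Define $h'$ on $J_2$ in the same way.

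The cleanest route to $\alpha(i,a)=\alpha(j,a)$ then seems to be the following. Compare $h$ with the observable on $J_1$ that swaps the values at $i$ and $r$, and similarly on $J_2$. Since $h$ and the swapped observable share the complement $\{(k,\epsilon u)\}$, we get
\[\alpha(i,a)+\alpha(r,b) = \alpha(i,b)+\alpha(r,a) \quad\text{with } b=u-a-c.\]
Doing the same on $J_2$ gives $\alpha(j,a)+\alpha(r,b)=\alpha(j,b)+\alpha(r,a)$.

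Combining these with variations in $\epsilon$ should force $\alpha(i,a)-\alpha(r,a)$ to be a constant independent of $a$. Summing over a test then forces that constant to be $0$. The case $a=u$ is immediate: $\{(i,u)\}$ is itself a test when $\{i\}\in\Ev(\J)$, so $\alpha(i,u)=1$.

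\textbf{Step 2: additivity.} Write $\beta(a)=\alpha(i,a)$. For any finite decomposition $E=\{a_1,\dots,a_n\}\in\D(\E)$, we need $\sum_k\beta(a_k)=1$. This requires realizing $E$ as a test, i.e.\ finding $J\in\J$ with $|J|\ge n$. If $|J|>n$, split one $a_k$ into pieces and use finite additivity arguments, comparing via perspectivity. Then $\beta$ is a probability weight on $\D(\E)$, so it extends to a state $\phi$, and $\alpha=\alpha_\phi=\tilde{\alpha}$.

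\textbf{Main obstacle.} I expect the main obstacle to be Step 1. Because zero values are forbidden, the naive perspectivity $\{(i,a)\}\sim\{(r,a)\}$ via the complement $\{(r,u-a)\}$, $\{(i,u-a)\}$ is available only when two-element index sets exist. The hypothesis $\{i,r\}\in\Ev(\J)$ only gives subsets of larger index sets. So the padding-and-swap bookkeeping above, which eliminates the padding terms, is where the care is needed.
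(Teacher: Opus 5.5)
The step that fails is the last sentence of Step~1: ``summing over a test then forces that constant to be $0$.'' Your swap identities do show that $\alpha(i,\cdot)-\alpha(r,\cdot)$ is constant on $(0,u)$. But write $\alpha(k,a)=\gamma(a)+c_k$ and sum over a test $\{(k,a_k)\}_{k\in J}$: you only get $\sum_{k\in J}\gamma(a_k)=1-\sum_{k\in J}c_k$. This pins down the sums $\sum_{k\in J}c_k$, not the individual offsets.

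No refinement of the bookkeeping can close this gap. With the hypothesis read literally, so that $\{i,r\},\{j,r\}$ are merely subsets of index sets, the statement is false. Take $\J$ to be the initial segments of $\N$. This satisfies the hypothesis with any $r>\max(i,j)$, and the paper itself draws the conclusion for $\M^{o}(\E)$. Set $\beta(i,a)=1$ if $i=1$ and $\beta(i,a)=0$ otherwise. Every test $G(f)$ with $f:\{1,\dots,n\}\to(0,u]$ contains exactly one outcome with first coordinate $1$, so $\beta$ is a probability weight on $\M^{o}(\E)$. It satisfies all of your swap identities, with offset $c_1-c_r=1$. Yet $\beta(1,\tfrac{1}{2}u)=1\neq 0=\beta(2,\tfrac{1}{2}u)$, so $\beta$ is not $\alpha_f$ for any state $f$. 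More generally, any $\mu:\bigcup\J\to[0,1]$ summing to $1$ over every $J\in\J$ gives a weight $(i,a)\mapsto\mu(i)$ that is not of the required form.

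Your ``Main obstacle'' paragraph identifies exactly where the paper's own proof is loose. The paper treats $\{(i,a),(r,u-a)\}$ and $\{(j,a),(r,u-a)\}$ as tests. Then $\{(i,a)\}\sim\{(j,a)\}$ through the shared complement $\{(r,u-a)\}$, and $\beta(i,a)=\beta(j,a)$ follows at once. That requires $\{i,r\},\{j,r\}\in\J$, not merely $\{i,r\},\{j,r\}\in\Ev(\J)$. So the repair belongs in the hypothesis, not the argument. Assume those two-element sets are themselves index sets, and Step~1 becomes that one-line perspectivity argument, with no padding needed.

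Your Step~2 is also under-justified. ``Split one $a_k$ into pieces'' presupposes the additivity $\beta(a+b)=\beta(a)+\beta(b)$ that you are trying to establish. What supplies it is the presence of three-element index sets as well. Once $\beta(i,a)$ is index-independent, compare a test $\{(i,a),(j,b),(k,c)\}$ with a two-outcome test $\{(i',a+b),(k',c)\}$; this gives additivity. The Busch-type extension the paper cites then produces the state. Two-element index sets alone do not suffice, since they only impose $\beta(a)+\beta(u-a)=1$.
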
 

{\em Proof:} Let $(i,a), (j,a)$ be two distinct outcomes of $\M^{\J}(\E)$, and let $\beta$ be a state on $\M^{\J}(\E)$. We wish to show, first, that $\beta(i,a) = \beta(j,a)$. If $a = u$, then $\{(i,a)\}$ and $\{(j,a)\}$ are both tests, and so $\beta((i,a)) = \beta((j,a)) = 1$. 
If $a < u$, then $a' = u - a \not = 0$. Construct two observables $f_i$ and $f_j$ by selecting some $k \not \in \{i,j\}$ and defining $f_{i}(i) = a, f_{i}(r) = u - a$ and similarly for $f_j$. Then $\{(i,a), (r,a')\}$ and $\{(j,a), (r,a')\} \in \M^{\J}(\E)$, and so $\{(i,a)\} \sim \{(j,a)\}$, so again, $\beta((i,a)) = \beta((j,a))$.  We now see that the function  $\alpha : (0,u] \rightarrow [0,1]$ given by $\alpha(a) = \beta(i,\alpha)$ is well-defined, and $\beta  = \tilde{\alpha}$. It remains to show that $\alpha$ extends uniquely to a state on $\E$. This is a more or less standard result. (It is proved by \cite{Busch} in the case where $\E$ is the space of self-adjoint operators on a Hilbert space, but his proof applies verbatim to the general setting of an OUS. See also   \cite[Appendix E]{WW-Gleason}, \cite{Wilce-GPT}.) $\Box$ 

The hypothesis of the Lemma is satisfied when $\J$ is the collection of all initial segments of $\N$. Thus, probability weights on $\M^{o}(\E)$ are (in effect) states on $\E$. 

{\bf Connections with quantum logic} 
In general, $\J$ is not irredundant, and thus, can't be construed as a test space. However, given a test space $\A$, we can certainly form $\M^{\A}(\E)$. 

{\em Notation:} If $A \in \Ev(\M^{\J}(\E))$, say with $A \subseteq G(f)$ for an observable $f : E \rightarrow \E$. Let 
\[A_o = \pi_1(A) = \{ x \in \bigcup \J | \exists a \in (0,\u] ~(x,a) \in A\}\]
and let $\sum A := \sum\{ a | (x,a) \in A\} = \sum_{x \in A_o} f(x)$.

\begin{lemma}\label{lem: complementary events}  
Let $\A$ be a test space, and let $A, B$ denote events of $\M^{\A}(\E)$.  
Then $A \co B$ iff $A_o \co B_o$ in $\Ev(\A)$ and $\sum B = u - \sum A$. 
\end{lemma}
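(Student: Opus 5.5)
We prove the two implications separately, unwinding the definitions. Throughout, $\M^{\A}(\E)=\bigcup_{E\in\A}\Ob(E,\E)$, and an event of $\M^{\A}(\E)$ is a subset of the graph $G(f)$ of some observable $f:E\rightarrow(0,u]$ with $E\in\A$. One fact is used repeatedly: a graph $G(f)$ contains at most one pair with a given first coordinate. Hence for an event $A\subseteq G(f)$ the projection $\pi_1$ restricted to $A$ is a bijection onto $A_o$, and $\sum A=\sum_{x\in A_o}f(x)$.

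For the forward direction, suppose $A\co B$. Then $A\cap B=\emptyset$ and $A\cup B=G(f)$ for some observable $f:E\rightarrow(0,u]$ with $E\in\A$. Projecting gives $A_o\cup B_o=\pi_1(G(f))=E\in\A$. Since $\pi_1$ is injective on $G(f)$ and $A,B$ are disjoint subsets of $G(f)$, we get $A_o\cap B_o=\emptyset$, so $A_o\co B_o$. Finally, $\sum A+\sum B=\sum_{x\in E}f(x)=u$, which gives $\sum B=u-\sum A$.

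For the converse, suppose $A\co B$ in the sense that $A_o\cap B_o=\emptyset$, $A_o\cup B_o=E\in\A$, and $\sum A+\sum B=u$. By hypothesis $A\subseteq G(f)$ and $B\subseteq G(g)$ for observables $f:E_1\rightarrow(0,u]$ and $g:E_2\rightarrow(0,u]$ with $E_1,E_2\in\A$. Define $h:E\rightarrow(0,u]$ by $h(x)=f(x)$ for $x\in A_o$ and $h(x)=g(x)$ for $x\in B_o$. This is well defined because $A_o\cap B_o=\emptyset$, and its values are nonzero effects. Moreover
\[\sum_{x\in E}h(x)=\sum A+\sum B=u,\]
so $h$ is an $E$-valued observable and $G(h)\in\Ob(E,\E)\subseteq\M^{\A}(\E)$. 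By construction $G(h)=A\cup B$, and $A\cap B=\emptyset$ because their projections are disjoint. Hence $A\co B$.

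The main subtlety is the degenerate case in which $A$ or $B$ is empty. If $A=\emptyset$, then $A_o=\emptyset$, and the condition $A_o\co B_o$ says $B_o\in\A$, while $\sum A=0$ forces $\sum B=u$. The gluing construction then gives $B=G(g|_{B_o})$, a test of $\M^{\A}(\E)$, so it needs no special treatment beyond checking that the empty sum is $0$. One should also note explicitly that if $f$ and $g$ happen to be defined on a common point, no conflict arises, since only $f$ on $A_o$ and $g$ on $B_o$ are used. This gluing step is the only point where anything beyond bookkeeping occurs.
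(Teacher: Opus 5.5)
Your proof is correct and follows the same route as the paper's. For the forward direction you project the test $A\cup B=G(f)$ via $\pi_1$, which is injective on graphs, and for the converse you glue the observables on $A_o$ and $B_o$ into one observable on $E=A_o\cup B_o$ whose graph is $A\cup B$. One small wording slip: the converse should begin by assuming $A_o\co B_o$ and $\sum A+\sum B=u$, not ``$A\co B$ in the sense that\ldots''; the argument itself is unaffected.
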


{\em Proof:} If $A \co B$, then $A \cap B = \emptyset$, and $A \cup B = G(f)$ for some $f \in \Ob(E,\E)$ for some test $E \in \A$.   Thus, $\sum A + \sum B = \sum(A \cup B) = u$, so $\sum  B = u - \sum A$. Since $\pi_1$ is injective on $G(f)$, $A_o \cap B_o = \emptyset$ and $A_o \cup B_o = E$. Thus, $A_o \co B_o$ in $\Ev(\A)$.  For the converse, suppose 
$A \subseteq G(g)$ and $B \subseteq G(h)$ for some observables $g, h$ with domains in $\A$. If $A_o \co B_o$, with $A_o \cup B_o = E \in \M$, define $f : E \rightarrow \E$ by $f(x) = g(x)$ if $x \in A_o$ and $f(x) = h(x)$ if $x \in B_o$. Then $\sum_{x \in E} f(x) = \sum A + \sum B$. If the latter is $u$, then $f$ is an observable with domain in $\A$, hence, $G(f)$ is a test in $\M^{\A}(\E)$. 
We have $A \cap B = \emptyset$ (since $A_o \cap B_o = \emptyset$) and $A \cup B = G(f)$. Thus, $A \co B$ in $\Ev(\M^{\A}(\E))$. $\Box$ 

Recall that a test space $\A$ is {\em algebraic} iff perspective events in $\Ev(\A)$ share exactly the same complementary events. 

\begin{proposition} 
$\M^{\A}(\E)$ is algebraic iff $\A$ is algebraic. 
\end{proposition}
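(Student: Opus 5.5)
The plan is to reduce everything to Lemma~\ref{lem: complementary events}, which describes complementation in $\M^{\A}(\E)$ entirely through the projections $A_o$ and the sums $\sum A$. First I will translate perspectivity. If $A \sim B$ in $\Ev(\M^{\A}(\E))$, there is an event $C$ with $A \co C$ and $B \co C$. By the Lemma this gives $A_o \co C_o$, $B_o \co C_o$ and $\sum A = u - \sum C = \sum B$. Hence $A \sim B$ implies both $A_o \sim B_o$ in $\Ev(\A)$ and $\sum A = \sum B$.

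\emph{($\A$ algebraic $\Rightarrow$ $\M^{\A}(\E)$ algebraic.)} Let $A \sim B$ and $A \co D$; I must show $B \co D$. By the first step, $A_o \sim B_o$ and $\sum A = \sum B$. The Lemma applied to $A \co D$ gives $A_o \co D_o$ and $\sum D = u - \sum A$. Algebraicity of $\A$ then yields $B_o \co D_o$, and also $\sum D = u - \sum B$. The converse half of the Lemma therefore gives $B \co D$.

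\emph{($\M^{\A}(\E)$ algebraic $\Rightarrow$ $\A$ algebraic.)} Let $a \sim b$ in $\Ev(\A)$ via a common complement $c$, and suppose $a \co d$; I must show $b \co d$. The degenerate cases are handled by irredundance of $\A$:
\begin{itemize}
\item If $c = \emptyset$, then $a$ and $b$ are tests. Hence $d = \emptyset$ by irredundance, and $b \co \emptyset$ holds trivially.
\item If $d = \emptyset$, then $a$ is a test, so $c = \emptyset$ and we are in the previous case.
\item If $a = \emptyset$, then $c$ is a test, so $b = \emptyset$ by irredundance; then $b \co d$ says exactly that $d$ is a test, which follows from $a \co d$.
\end{itemize}
If $b = \emptyset$, swap roles symmetrically. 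Otherwise $a, b, c, d$ are all nonempty finite sets. I then lift them using scalar effects:
\[
A = \{(x, \tfrac{1}{2|a|}u) : x \in a\}, \qquad B = \{(x, \tfrac{1}{2|b|}u) : x \in b\},
\]
\[
C = \{(x, \tfrac{1}{2|c|}u) : x \in c\}, \qquad D = \{(x, \tfrac{1}{2|d|}u) : x \in d\}.
\]
All these effects lie in $(0,u]$, and each lifted set has sum $\tfrac12 u$. Each of $A, B, C, D$ is an event, since it is contained in the graph of an observable on a test containing the corresponding event of $\A$; for instance $A \cup C$ is itself such a graph. The Lemma now gives $A \co C$, $B \co C$ and $A \co D$, so $A \sim B$. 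Algebraicity of $\M^{\A}(\E)$ gives $B \co D$, and the Lemma once more gives $b = B_o \co D_o = d$.

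I expect the main obstacle to be this second direction. The lifted effects must be chosen nonzero and with the correct sums simultaneously for all four events, which forces the separate treatment of empty events via irredundance. The uniform choice $\tfrac12 u$ for each lifted event resolves this as long as all four events are nonempty.
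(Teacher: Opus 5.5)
Your proof is correct and is essentially the paper's argument: the paper simply declares the Proposition ``immediate from the Lemma'' on complementary events, and you carry out exactly that reduction in both directions. What you add is the detail the paper leaves implicit for the converse direction, namely the explicit lifting of $a,b,c,d$ via the scalar effects $\tfrac{1}{2|a|}u$ (and so on), together with the irredundance argument that disposes of the empty-event cases.
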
 

{\em Proof:} Immediate from the Lemma. $\Box$ 

If $\A$ is algebraic, perspectivity is an equivalence relation on $\Ev(\A)$, and the equivalence classes can be organized in a natural way into an orthoalgebra $\Pi(\A)$, called the {\em logic} of $\A$. For details, see \cite{Wilce-TS}. The logic of $\M^{\A}(\E)$, where $\A$ is algebraic, is easy to characterize. If $L$ and $M$ are effect algebras, let 
\[L \ast M = \{ (a,b) \in L \times M | a = 0 \iff b = 0 \ , \ a = 1 \iff b = 1\}.\]
That is, 
\[L \ast M = L \times M \setminus ( \{0,1\} \times M \cup L \times \{0,1\}) \cup \{(0,0), (1,1)\}.\]
For $(a,b), (u,v) \in L\ast M$, define 
\[(a,b) \perp (u,v) \iff a \perp u, \ b \perp v, \ \mbox{and} \
(a \oplus u, b \oplus v) \in L \ast M.\]
In this case, let $(a,b) \oplus (u,v) = (a \oplus u, b \oplus v)$. It is easy to check that $(L\ast M, \oplus)$ is then an effect algebra 
with $0_{L\ast M} = (0,0)$ and $1_{L\ast M} = (1,1)$; moreover, the projection maps $L \ast M \rightarrow L$ and $L\ast M \rightarrow M$ are surjective effect-algebra homomorphisms, and if $L$ is an orthoalgebra, so is $L\ast M$.  

The proof of Lemma \ref{lem: complementary events} above gives us 

\begin{lemma} If $\A$ is algebraic, $\Pi(\M^{\A}(\E)) \simeq \Pi(\A) \ast [0,u]$.
\end{lemma}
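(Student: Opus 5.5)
The plan is to define a map $\Phi$ on events of $\M^{\A}(\E)$ by $\Phi(A) = ([A_o], \sum A)$. Here $[A_o]$ is the perspectivity class of $A_o$ in $\Pi(\A)$, and $\sum A \in [0,u]$. We then check three things: $\Phi$ lands in $\Pi(\A) \ast [0,u]$; it is constant on perspectivity classes of $\M^{\A}(\E)$, so it descends to $\Pi(\M^{\A}(\E))$; and the induced map is a bijective effect-algebra isomorphism. By the Proposition, $\M^{\A}(\E)$ is algebraic, so its logic is defined. Lemma \ref{lem: complementary events} will do most of the work, since it expresses complementation in $\M^{\A}(\E)$ entirely through the pair $(A_o, \sum A)$.

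\textbf{Well-definedness.} Note first that $A_o=\emptyset$ iff $A=\emptyset$ iff $\sum A = 0$, because every effect occurring in a graph is nonzero. Also, $A$ is a test iff $A_o$ is a test of $\A$, and in that case $\sum A = u$. Conversely, if $\sum A = u$ with $A\subseteq G(f)$, then the remaining terms of $f$ sum to $0$ and are positive, hence vanish; so they are absent and $A=G(f)$ is a test. These facts give the conditions $[A_o]=0 \iff \sum A=0$ and $[A_o]=1\iff \sum A = u$, so $\Phi(A)\in \Pi(\A)\ast[0,u]$. Next suppose $A\sim B$ via a common complement $C$. By Lemma \ref{lem: complementary events}, $A_o \co C_o$ and $B_o\co C_o$, so $A_o\sim B_o$. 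The same lemma gives $\sum A = u-\sum C = \sum B$. Hence $\Phi$ descends to perspectivity classes.

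\textbf{Injectivity and surjectivity.} For injectivity, suppose $A_o\sim B_o$ and $\sum A=\sum B$. Choose a complement $C$ of $A$, which exists by definition of an event. Lemma \ref{lem: complementary events} gives $A_o\co C_o$. Algebraicity of $\A$ then gives $B_o\co C_o$, and we have $\sum C = u-\sum B$. The converse direction of the lemma now yields $B\co C$, so $A\sim B$. For surjectivity, take $([e], a)$ with $0<a<u$. Pick $e\in\Ev(\A)$ in that class with complement $e'$, both nonempty. Then define an observable on the test $e\cup e'$ that spreads $a$ equally over $e$ and $u-a$ equally over $e'$. All values are nonzero, so this is a legitimate observable, and the portion of its graph lying over $e$ maps to $([e],a)$. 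The boundary pairs $(0,0)$ and $(1,1)$ come from $\emptyset$ and from tests.

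\textbf{Operations.} In $\Pi$, $[A]\oplus[B]$ is defined iff $A,B$ are disjoint and $A\cup B$ is an event (after choosing representatives), and then $[A]\oplus[B]=[A\cup B]$. If $A\perp B$ in $\M^{\A}(\E)$, then $A_o\perp B_o$ and $\Phi(A\cup B) = ([A_o]\oplus[B_o], \sum A+\sum B)$, which lies in $L\ast M$; so $\Phi$ preserves $\oplus$ and the unit. The main obstacle is the converse, reflecting orthogonality. Suppose $[A_o]\perp[B_o]$, $\sum A + \sum B\le u$, and the sum pair lies in $L\ast M$. We must produce representatives whose union is an event of $\M^{\A}(\E)$. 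First choose disjoint $e,f$ in one test with $e\cup f$ having a complement $g$. Then build a single observable on $e\cup f\cup g$ with weight $\sum A$ spread over $e$, $\sum B$ over $f$, and $u-\sum A-\sum B$ over $g$. The $\ast$-condition is exactly what is needed here: when $\sum A+\sum B = u$ we have $[e\cup f]=1$, so we may take $g=\emptyset$; when $\sum A+\sum B<u$, $g$ is nonempty and receives positive weight. The events over $e$ and over $f$ represent the classes of $A$ and $B$ by the injectivity step, and their union is an event, so $\Phi^{-1}$ preserves $\oplus$.
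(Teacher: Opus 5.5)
Your proposal is correct and follows the paper's route. The paper just notes that Lemma~\ref{lem: complementary events} yields the isomorphism $[A]\mapsto([A_o],\sum A)$; you supply the details it leaves implicit (well-definedness, injectivity via algebraicity of $\A$, surjectivity, and reflection of $\oplus$ via the $\ast$-condition), though you use irredundance of $\A$ tacitly when identifying $[A_o]=0$ with $A_o=\emptyset$ and $[A_o]=1$ with $A_o\in\A$.
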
 

\begin{remark} As far as we know, the $L \ast M$ construction 
has not been discussed in the literature on OMLs and OMPs. 
It would be of interest to know what kinds of structure 
arise when one or both factors are Boolean, or $MV$ algebras. We leave this to future work.
\end{remark}

\section{Processes and Channels} 

In a number of papers (e.g, \cite{BW-Foils}), a {\em process} between probabilistic models $A$ and $B$ is defined to be a positive linear mapping $\phi : \V(A) \rightarrow \V(B)$ satisfying $u_{B}(\phi(\alpha)) \leq 1$ for 
all states $\alpha \in \Omega(A)$. This gives one a dual positive mapping $\phi^{\ast} : \V(B)^{\ast} \rightarrow \V(A)^{\ast}$ with $\phi^{\ast}(u_B) \leq u_A$. For our purposes, it's easier to work with these. So let's make it official, and also generalize a bit:

\begin{definition} 
If $\E$ and $\F$ are OUS-es, a (dual) {\em process} from $\E$ to $\F$ is a positive linear mapping $\Phi : \E \rightarrow \F$ with $\Phi(u_{\E}) \leq u_{\F}$. If this is an equality, we say that $\Phi$ is {\em unit-preserving,} or a {\em channel}.
\end{definition} 

Where $\E$ and $\F$ are 
the spaces of bounded linear operators on a pair of Hilbert spaces, this specializes to the usual notion of a quantum channel. Where $\E$ and $\F$ are the spaces of bounded measurable mappings on a pair of measurable spaces, this specializes to a Markov kernel (see Appendix B).

Any morphism of probabilistic models $\phi : A \rightarrow B$ gives rise to a process $\Phi : \E(A) \rightarrow \E(B)$ by setting $\Phi(\hat{a}) = \hat{\phi(a)}$. However, in general a process $\E(A) \rightarrow \E(B)$ needn't arise from such a morphism; indeed, it needn't pay any attention at all to the test-space structure of the model. However, as we'll see, it both gives rise to, and arises from, a PM-morphism $A^{o} \rightarrow B^{o}$. 

\begin{proposition}\label{prop: sub-ous} Let $\E$ be f.d..  If $v \in [0,u_{\E}]$, then the span $\E_{v}$ of $[0,v]$,  ordered by the cone $\E^{+}_{v} := \left \{ \sum_{i} t_i f_i \mid f_i \in [0,v], t_i \geq 0 \right \}$, 
is an OUS with $v$ as order unit. If $u$ is Archimedean, so is $v$ in $\E_{v}$. 
\end{proposition}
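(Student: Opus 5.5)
We need to show: $\E_v$ is a vector space, $\E_v^+$ is a cone with $\E_v^+ \cap -\E_v^+ = \{0\}$, $v$ is an order unit for $\E_v$ (every $x\in \E_v$ satisfies $-tv \le_v x \le_v tv$ for some $t$), and Archimedean if $u$ is.

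The plan runs as follows. $\E_v$ is a subspace by definition, and $\E^+_v$ is clearly a convex cone closed under positive scalars and sums. Since $\E^+_v \subseteq \E_+$, properness $\E^+_v \cap -\E^+_v = \{0\}$ follows from properness of $\E_+$. Note also that $\E^+_v$ spans $\E_v$, since $\E_v$ is spanned by $[0,v]\subseteq \E^+_v$.

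For the order unit property, let $x = \sum_i s_i f_i$ with $f_i \in [0,v]$ and $s_i\in\R$. Each $f_i$ satisfies $f_i \in \E_v^+$, and also $v - f_i \in [0,v] \subseteq \E_v^+$, so $-v \le_v f_i \le_v v$ in the new ordering. Summing with $t = \sum_i |s_i|$ gives $-tv \le_v x \le_v tv$. Hence $v$ is an order unit for $(\E_v,\E_v^+)$. This uses only the algebra of $[0,v]$: if $0\le g\le v$ then $v-g\in[0,v]$.

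The Archimedean property is the main obstacle. The issue is that $\E^+_v$ may be strictly smaller than $\E_+\cap \E_v$, so we cannot simply inherit closure. Suppose $x\in\E_v$ with $x + \epsilon v \in \E_v^+$ for all $\epsilon>0$; we must show $x\in \E_v^+$. Here finite dimensionality enters.

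1. In finite dimensions, $\E_v$ is finite-dimensional. The set $[0,v]=\{a : 0\le a\le v\}$ is closed and bounded in $\E$ when $u$ is Archimedean, so $\E_+$ is closed. Boundedness comes from $[0,v]\subseteq[0,u]$, which is bounded in the order-unit norm. So $[0,v]$ is compact and convex.
2. $\E_v^+$ is the cone generated by the compact convex set $[0,v]$. To show it is closed, I would use the standard fact that the cone over a compact convex set not containing $0$ is closed. Since $0\in[0,v]$, I would instead use a base: pick a state $\phi$ on $\E$ (which is strictly positive on nonzero positive elements is not guaranteed). Simpler: in finite dimensions, let $C=[0,v]$. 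Since $\E_v^+ = \bigcup_{t\ge0} tC$ (using convexity, $\sum t_i f_i = T\cdot\sum (t_i/T) f_i$), take $y_n = t_n c_n \to y$. If $t_n$ is bounded, pass to convergent subsequences by compactness to get $y\in \E_v^+$. If $t_n\to\infty$, then $c_n = y_n/t_n\to 0$, which gives no contradiction directly. I would handle this by choosing a faithful positive functional on $\E_v$: in finite dimensions, with $u$ Archimedean, $\E_+$ is closed and pointed, so there is a strictly positive functional $\psi$ (interior of the dual cone). Then every nonzero element of $\E_v^+$ is $\psi(y)^{-1}$ times an element of the compact set $K=\{y\in\E_v^+: \psi(y)=1\}$. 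I would show $K$ is compact by bounding elements $tc$ with $\psi(tc)=1$: since $\psi(c)\ge \delta\|c\|$ on the pointed closed cone, $t\|c\|\le 1/\delta$, so $K$ is bounded. $K$ is closed because it is the image of the compact set $\{(t,c): t\psi(c)=1\}\cap$ bounded region, which needs the same estimate.
3. With $\E_v^+$ closed, $x+\epsilon v\in\E_v^+$ for all $\epsilon$ gives $x\in\E_v^+$ on letting $\epsilon\to0$.

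If this closedness argument becomes messy, an alternative is Carathéodory's theorem in $\E_v$: represent elements of $\E_v^+$ with boundedly many generators and bounded coefficients. The bound on the coefficients comes from the strictly positive functional, and compactness then finishes the argument.
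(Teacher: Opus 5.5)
Your handling of the cone, its properness, and the order-unit property is fine and matches the paper. The gap is in the Archimedean part, and it rests on a false premise. $\E_v^+$ cannot be strictly smaller than $\E_+\cap\E_v$: the two are equal, and this identity is the key lemma of the paper's Appendix A.

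Without that identity, your closedness argument does not go through. Write $y=tc$ with $c\in[0,v]$. The estimate $\psi(c)\ge\delta\|c\|$ bounds only $t\|c\|=\|y\|$, not $t$. Since $[0,v]$ contains elements arbitrarily close to $0$, the parameter set $\{(t,c): t\psi(c)=1\}$ is unbounded. A sequence $t_nc_n\to y$ with $t_n\to\infty$ and $c_n\to 0$ is exactly the case you leave unresolved. The Carath\'eodory fallback has the same defect, because $\psi(f_i)$ can be arbitrarily small.

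No argument that uses only ``compact convex $C\ni 0$ inside a closed pointed cone with a strictly positive functional'' can succeed. Take $P=\{z\ge\sqrt{x^2+y^2}\}\subseteq\R^3$ and $C=\{(x,y,z)\in P: z\le 1,\ z-x\ge z^2\}$. The cone generated by $C$ contains $(1-z,0,1)$ for every $0<z\le 1$ but not $(1,0,1)$, so it is not closed. Yet $\psi(x,y,z)=z$ is strictly positive on $P\setminus\{0\}$. The specific order-interval form of $[0,v]$ has to be used.

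The missing step is elementary:
\begin{itemize}
\item Read your own order-unit computation in the order of $\E$. It shows that every $x\in\E_v$ satisfies $x\le tv$ for some $t>0$.
\item If also $x\ge 0$, then $t^{-1}x\in[0,v]$, so $x\in\E_v^+$. Hence $\E_v^+=\E_v\cap\E_+$, and $\le_v$ is just the restriction of $\le$.
\item The Archimedean property now needs neither finite dimensionality nor topology. If $x\in\E_v$ and $nx\le_v v$ for all $n$, then $nx\le v\le u$ in $\E$. So $x\le 0$ in $\E$, i.e.\ $-x\in\E_+\cap\E_v=\E_v^+$.
\item In your $\epsilon$-formulation: $x+\epsilon v\ge 0$ gives $x+\epsilon u\ge 0$ for all $\epsilon>0$. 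Hence $x\in\E_+\cap\E_v=\E_v^+$.
\end{itemize}
If you prefer the topological route, the identity also makes $\E_v^+$ the intersection of the closed cone $\E_+$ with the subspace $\E_v$. So $\E_v^+$ is closed, and no compactness argument is needed.
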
 

The proof is routine, but in the interest of completeness, is given in Appendix A. Proposition \ref{prop: sub-ous} has several immediate consequences:

\tempout{
{\em Proof:} Since  $\F_{+} \subseteq \E_{+}$, we have $\F_{+} \cap -\F_{+} = \{0\}$, and $\F_{+}$ spans $\F$; thus, $\F_{+}$ is a proper cone, ordering $\F$. Now let $x \in \F$. Then for finitely many $f_i \in [0,v]$, 
$x = \sum_i  t_i f_i$ where $t_i \in \R$. Let $I = \{ i | t_i \geq 0\}$ and $J = \{ j | t_j < 0\}$. Then 
\[x = \underbrace{(\sum_{i \in I} t_i f_i )}_{a} - \underbrace{(\sum_{j \in J} |t_j| f_j)}_{b}.\]
Let $m = \sum_{i \in I} t_i$ and $n = \sum_{j \in J} |t_j|$. Then $0 \leq a \leq mu$ and $0 \leq b \leq nv$, whence, $-nv \leq x \leq mv$. It follows that $[-u,u]$ is absorbing, so $v$ is an order unit. $\Box$ 
}


\begin{corollary}\label{cor: procs to channels} Let 
$\Phi : \E \rightarrow \F$ be a process from an OUS $\E$ to 
an OUS $\F$. Then 
\vspace{-.2in}
\begin{itemize}  
\item[(a)] If $v = \Phi(u_{\E})$, then $\ran(\Phi) \subseteq \F_{v}$ and $\Phi : \E \rightarrow \F_{v}$ is a channel. 
\item[(b)] If $F$ is an $\E$-valued weight on $\M$, then $\Phi \circ F$ is an $\F_{v}$-valued weight on $\M$, where $v = F(u_{\E})$. 
\item[(c)] If $\Phi$ is a channel, then $\Phi$ extends to an $\F$-valued weight $(i,x) \mapsto \Phi(x)$ on $\M^{o}(\E)$.
\end{itemize}  
\end{corollary}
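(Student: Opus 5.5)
We check the three items in turn; each reduces to positivity and linearity of $\Phi$, together with Proposition \ref{prop: sub-ous} to know that $\F_{v}$ is an order-unit space with unit $v$. Throughout, $v = \Phi(u_{\E})$ lies in $[0,u_{\F}]$, since $\Phi$ is positive and $\Phi(u_{\E}) \leq u_{\F}$. So Proposition \ref{prop: sub-ous} applies, at least for f.d. $\F$ as stated there. For the general case I would note that its argument that $v$ is an order unit for $\F_{v}$ does not use finite dimensionality.

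For (a), the key observation is that for any $a \in [0,u_{\E}]$, positivity gives $0 \leq \Phi(a) \leq \Phi(u_{\E}) = v$, so $\Phi(a) \in [0,v]$. Every $x \in \E$ is a real combination of effects. Indeed, since $u_{\E}$ is an order unit, $-tu \leq x \leq tu$ for some $t>0$, and then
\[x = (x+tu) - tu,\]
where $x + tu \in [0,2tu]$ is a positive multiple of an effect. Hence $\Phi(x) \in \spn [0,v] = \F_{v}$. Positivity into the cone $\F^{+}_{v}$ follows the same way: if $x \geq 0$ then $x = s a$ with $a \in [0,u_{\E}]$, so $\Phi(x) = s\Phi(a) \in \F_{v}^{+}$. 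Since $\Phi(u_{\E}) = v$ is the unit of $\F_{v}$, the map $\Phi : \E \rightarrow \F_{v}$ is a channel.

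For (b), I read $v$ as $\Phi(u_{\E})$; the statement writes $F(u_{\E})$, which is presumably a typo. Each $\Phi(F(x))$ lies in $\F_{v}^{+}$ by (a). For a test $E$, linearity gives
\[\sum_{x\in E}\Phi(F(x)) = \Phi\Big(\sum_{x \in E} F(x)\Big) = \Phi(u_{\E}) = v.\]
The main point needing care is infinite tests. There I would use that a positive linear map between order-unit spaces is norm-bounded, with $\|\Phi\| = \|\Phi(u)\|$, so it preserves norm limits of the partial sums; that suffices by the footnote to the definition of a weight.

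For (c), the map sends an outcome $(i,x)$ of $\M^{o}(\E)$ to $\Phi(x) \in \F_{+}$. A test is the graph $G(f)$ of an ordered observable $f$, and summing over it gives
\[\sum_i \Phi(f(i)) = \Phi(u_{\E}) = u_{\F}.\]
Here the sums are finite, so no limit argument is needed. The only real obstacle overall is the bookkeeping in (a) showing the range lands in $\F_{v}$ with the correct cone; the rest is immediate.
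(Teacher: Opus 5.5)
Your proposal is correct and is the routine unpacking the paper intends: the paper gives no separate argument, presenting the corollary as an immediate consequence of Proposition \ref{prop: sub-ous}, which is how you use it. Your reading of $v$ in (b) as $\Phi(u_{\E})$ and your remark that the appendix proof of that proposition never uses finite dimensionality are both right. For infinite tests, apply your norm bound to $\Phi$ regarded as a positive map $\E \rightarrow \F_{v}$ (norm at most $1$), so that the partial sums converge in the order-unit norm of $\F_{v}$ itself.
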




\begin{lemma}\label{lem: channels to morphisms}   
Any channel $\Phi : \E \rightarrow \F$ defines a test-preserving morphism $\phi : \M^{\J}(\E) \rightarrow \M^{o}(\F)$ by 
$\phi(x,a) = (x,\Phi(a))$. 
\end{lemma}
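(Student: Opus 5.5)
The plan is to check directly the three requirements on a morphism of test spaces. The map $\phi$ must be well defined on outcomes. It must carry each test onto a test, which is what ``test-preserving'' asks for. It must be injective on each test. Alongside these, I would check that $\phi$ pulls states of $\Mod^{o}(\F)$ back into $\Omega^{\J}(\E)$, so that $\phi$ is also a morphism of probabilistic models. Throughout, the index sets are read as they are in Corollary \ref{cor: procs to channels}(c). That is, the indices $x \in \bigcup\J$ are taken as indices of ordered observables, $\{1,\dots,n\}$. If $\J$ contains other finite sets, I would first fix an enumeration of each $J\in\J$ and transport along it. This reduction is routine.

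\textbf{Step 1: tests go to graphs of observables.} Let $G(f)\in\Ob(J,\E)$ be a test, with $f:J\to(0,u_\E]$ and $\sum_{x\in J}f(x)=u_\E$. Put $g=\Phi\circ f$. Since $\Phi$ is positive, $0\le g(x)$. Since $\Phi$ is positive and $f(x)\le u_\E$, we get $g(x)\le\Phi(u_\E)=u_\F$. By linearity and unit preservation,
\[\sum_{x\in J} g(x)=\Phi\Big(\sum_{x\in J} f(x)\Big)=\Phi(u_\E)=u_\F .\]
Hence $\phi(G(f))=\{(x,\Phi(f(x)))\mid x\in J\}$ is the graph of $g$, and so it is a test of $\M^{o}(\F)$.

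\textbf{Step 2: local injectivity.} The first coordinates of the pairs in $G(f)$ are distinct. The map $\phi$ does not change first coordinates. So $\phi$ is injective on every test, and distinct outcomes of one test stay orthogonal.

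\textbf{Step 3: pulling back states.} Take a state $\beta=\alpha_h$ with $h\in S(\F)$. Then
\[\beta(\phi(x,a))=h(\Phi(a))=\alpha_{h\circ\Phi}(x,a).\]
Here $h\circ\Phi$ is positive, linear and sends $u_\E$ to $1$, so it is a state of $\E$. Thus $\beta\circ\phi\in\Omega^{\J}(\E)$.

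\textbf{Main obstacle.} The delicate point is that a channel may send a nonzero effect $a$ to $0$, and outcomes of $\M^{o}(\F)$ must have strictly positive second coordinate. In that case $g=\Phi\circ f$ can take the value $0$ at some index, so $g$ is not an observable in the strict sense, even though its values still sum to $u_\F$. I would first try to absorb this using the paper's definition of the graph, $G(g)=\{(x,b)\mid g(x)=b>0\}$. Under that definition the pairs with zero value are discarded automatically, so $G(\Phi\circ f)$ is still a partition of $u_\F$ by nonzero effects, and $\phi(G(f))$ is that test. An outcome $(x,a)$ with $\Phi(a)=0$ carries probability $0$ in every pulled-back state. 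Accordingly, I would check that sending it nowhere is compatible with test preservation in the sense used here: the image of every test is exactly the graph of $\Phi\circ f$. This is the one place where the argument depends on how strictly ``morphism'' is read, and I would make that reading explicit at this step. All remaining verifications are the short computations in Steps 1--3.
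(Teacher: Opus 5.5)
Your Steps 1 and 2 are the paper's proof: it computes $\sum_{x\in J}\Phi(f(x))=\Phi(u_{\E})=u_{\F}$ and appeals to local injectivity. Step 3 is a correct extra check that the paper omits. The trouble is in the two places where you go beyond that.

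First, the enumeration ``reduction'' is neither routine nor needed. Step 1 actually shows $\phi(G(f))=G(\Phi\circ f)\in\Ob(J,\F)\subseteq\M^{\J}(\F)$, which is also all the paper's proof shows. So the natural codomain is $\M^{\J}(\F)$, and this sits inside $\M^{o}(\F)$ only when every $J\in\J$ is an initial segment $\{1,\dots,n\}$. Re-indexing along enumerations changes $\phi$. Moreover, a single outcome $(b,v)$ can lie in tests indexed by different members of $\J$, so the re-indexing would have to come from one map $e:\bigcup\J\to\N$ carrying each $J$ bijectively onto $\{1,\dots,|J|\}$. For $\J=\{\{a,b\},\{b,c\},\{a,c\}\}$ no such $e$ exists.

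Second, you are right, and the paper's proof overlooks this, that ``$\Phi\circ f$ is a $J$-valued observable'' fails when $\Phi$ annihilates a nonzero effect. For example, take $\E=\R^{2}$ with the pointwise order and unit $(1,1)$, $\F=\R$, and $\Phi(s,t)=s$. But your proposed absorption does not close the gap. The paper explicitly forbids zero values in observables. So $\{(x,g(x)) : g(x)>0\}$ is the graph of an observable indexed by $J'=\{x\in J : \Phi(f(x))\neq 0\}$. That set is a test of $\M^{o}(\F)$ (resp.\ $\M^{\J}(\F)$) only if $J'$ is an initial segment (resp.\ $J'\in\J$). With $f(1)=(0,1)$ and $f(2)=(1,0)$ on $J=\{1,2\}$, you get $\{(2,1)\}$, which is not a test of $\M^{o}(\R)$. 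Also, $\phi(1,(0,1))=(1,0)$ is not an outcome at all, so $\phi$ would have to become an event-valued map sending such outcomes to $\emptyset$.

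So the step you defer (``I would make that reading explicit'') is exactly where the proof is incomplete. To finish, you must commit to one of two repairs:
\begin{itemize}
\item Assume $\Phi$ is faithful on effects, i.e.\ $0\neq a\geq 0\Rightarrow\Phi(a)\neq 0$ (e.g.\ $\Phi$ injective). Then Steps 1--3 prove the lemma with codomain $\M^{\J}(\F)$.
\item Set $\phi(x,a)=\emptyset$ when $\Phi(a)=0$, and take the codomain to be $\M^{\Ev(\J)}(\F)$. Its index family is closed under subsets, so $\phi(G(f))=G\bigl((\Phi\circ f)|_{J'}\bigr)$ is a test.
\end{itemize}
Step 3 survives either repair, since $h(0)=0$.
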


{\em Proof:} If $f : J \rightarrow \E$ is a $J$-valued observable on 
$\E$ for some $J \in \J$, then $\sum_{x \in J} \Phi(f(x)) = \Phi(u_{\E}) = u_{\F}$, so $\Phi \circ f$ is a $J$-valued observable on $\F$. 
Since $\phi$ is locally injective, it follows that it's a test-preserving test-space morphism. $\Box$

\begin{remark} Suppose now that $\D(\E)$ is the space of finite subsets of $[0,u]$ summing to $u$, regarded as a test space. Then $\D(\E)^{o} \subset \M^{o}(\E)$ is precisely the set of sequences $a_1,...,a_n$ of 
effects summing to $u$ {\em without repetition}. Any $\Phi$ as above gives us a mapping $\Phi : \D(\E) \rightarrow \D(\F)$. As noted earlier, is not a test-space morphism; but $\Phi^{o} : \M^{o}(\E) \rightarrow \M^{o}(\F)$ {\em is} a test-space morphism. \end{remark}

\section{Composites} 

It is widely understood nowadays that properties of composite systems --- in particular, the existence of entangled states and measurements --- represent the most important departure of QM from classical physics. The development of the convex-operational framework largely predated the emergence of this consensus, but there is no difficulty in bringing it up to date. 



\begin{definition} Let $\E$ and $\F$ be order unit space. 
A {\em composite} of $\E$ and $\F$ is an OUS $\G$, together 
with a bilinear mapping $\pi : \E \times \F \rightarrow \G$ 
such that  $\pi(a, b) \geq 0$ for all $a \in \E_+, b \in \F_+$, 
and $\pi(u_{\E}, u_{\F}) = u_{\G}$

Where we think of $\E$ and $\F$ as containing the effects 
associated with two physical systems, $\pi(a,b)$ is the 
``product effect" encoding the fact that $a$ occurred in a 
measurement on $\E$, and $b$ occurred in a measurement on $\F$. (Tacitly, we are assuming that the systems 
$\E$ and $\F$ can be jointly measured when brought together 
in the composite system $\G$.) 
\end{definition}  

Of course, the bilinear mapping $\pi$ associated with such a composite extends uniquely to a linear mapping, which we 
also denote by $\pi$, from $\E \otimes \F$ to $\G$. We say that the composite 
$(\G,\pi)$ is {\em locally tomographic} when this is an 
isomorphism.  In that case, one can identify $\G$ with 
$\E \otimes \F$, and treat $\G_{+}$ as 
a cone in the former. One finds that there are two extremal 
possibilities \cite{Wittstock}:  the {\em minimal} or {\em projective} cone $(\E \otimes_{\min} \F)_{+}$, consists of linear combinations with non-negative coefficients 
of pure tensors $a \otimes b$ with $a, b \geq 0$. The 
{\em maximal} cone, consisting of all elements $u \in \E \otimes \F$ such that $(f \otimes g)(u) \geq 0$ for 
all positive linear functionals $f$ and $g$ on $\E$ and $\F$, respectively. One then finds that 
\[(\E \otimes_{\min} \F)_{+} \subseteq \G_{+} \subseteq (\E \otimes_{\max} \F).\]

The minimal cone contains no entangled effects, but unless $\E$ or $\F$ is a simplex, its state space abounds with entangled states. Dually, the maximal cone contains a wealth of entangled effects, but supports no entangled states. The general non-signalling composite allows for both. 


Both the minimal and maximal tensor products are associative, and can be used to turn $\OUS$ into a symmetric monoidal category. More generally, Suppose $(\Cat, \Box)$ is a symmetric monoidal category, $T$ is an injective-on-objects functor $\Cat \rightarrow \OUS$, and $\pi$ $\pi$ is a natural transformation $T \mintensor T \rightarrow T \circ \otimes$. Thus, the components of $\pi$ are positive linear mappings 
\[\pi_{A,B} : T(A) \mintensor T(B) \rightarrow T(A \Box B)\]
When these make $(T(A \Box B), \pi_{A,B})$ a non-signalling 
composite for every $A, B \in \Cat$, we say that 
$(\Cat, \Box, \pi)$ is a (symmetric) {\em monoidal convex operational theory}.  

\begin{example} 
Standard non-relativistic QM can be regarded as an example. For $\k = {\mathbb C}$ or $\R$, let $\Cat$ be the category of $\k$-Hilbert spaces and isometries, and set $T(\H) = \B_{\sa}(\H)$, the space of bounded self-adjoint linear operators on $\H$, regarded as an order-unit space with the usual order and the identity operator as the order unit. If $f : \H \rightarrow \K$ is an isometry, then we have $T_{f} : \B_{\sa}(\H) \rightarrow \B_{\sa}(\K)$ given by $T_{f}(a) = f a f^{\ast}$. It is not hard to check that $T(\H \otimes \K) = \B_{\sa}(\H \otimes \K)$ is a non-signalling composite of $\B_{\sa}(\H)$ and $\B_{\sa}(\K)$ in the above sense. 
\end{example} 

There is a similar definition of non-signalling composites of probabilistic models, and of monoidal probabilistic theories \cite{Wilce-GPT}. Using this, one has 

\begin{theorem}\label{thm: monoidality} Let $(\Cat, \Box, \pi)$ be any monoidal convex operational theory, and let $\J$ be closed 
under finite Cartesian products. Then $\Mod^{~\J}$ is a monoidal functor into $\Prob$.
\end{theorem}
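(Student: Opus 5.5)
The plan is to build the monoidal structure on $\Mod^{\J}\circ T$ explicitly. First, the composite of models $\Mod^{\J}(T(A))$ and $\Mod^{\J}(T(B))$ in $\Prob$ is the non-signalling composite recalled in Appendix C. For each pair of objects I exhibit a morphism of models
\[\mu_{A,B} : \Mod^{\J}(T(A)) \otimes \Mod^{\J}(T(B)) \rightarrow \Mod^{\J}(T(A\Box B)),\]
and I will check that it is natural and satisfies the coherence conditions.

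On outcomes the map is
\[\mu_{A,B}((x,a),(y,b)) = ((x,y), \pi_{A,B}(a \otimes b)).\]
Closure of $\J$ under finite Cartesian products is exactly what makes $I\times J\in\J$ whenever $I,J\in\J$. So for observables $f : I\rightarrow(0,u_A]$ and $g : J\rightarrow(0,u_B]$, the product map $(x,y)\mapsto\pi(f(x)\otimes g(y))$ is a candidate $I\times J$-valued observable on $T(A\Box B)$. By bilinearity and $\pi(u_A\otimes u_B)=u_{A\Box B}$, it sums to the unit. Its values are positive because $\pi$ is positive on the minimal tensor cone.

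There is a genuine snag here: the values must be \emph{nonzero}, and $\pi(a\otimes b)$ could vanish for a degenerate composite. I would handle this as follows. Either I argue that the non-signalling condition forces $\pi(a\otimes b)\neq 0$ for nonzero effects: take states $\alpha,\beta$ with $\alpha(a),\beta(b)>0$; the product state $\alpha\otimes\beta$ should lie in the composite's state space, so it gives $\pi(a\otimes b)$ a positive value. Or, failing that, I add this as a standing assumption on the theory.

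Given this, $\mu$ sends product tests $G(f)\times G(g)$ to tests $G(f\times g)$. It is locally injective on product tests because it is injective on indices. It must also send the two-stage (sequential) tests of the non-signalling composite to tests. For such a test, $(x,a)$ is followed by an observable $g_{(x,a)}$ that depends on the first outcome. The same sum computation applies: $\sum_x\sum_y \pi(a_x\otimes g_x(y)) = \sum_x \pi(a_x\otimes u_B) = u$, with the index set $\bigcup_x\{x\}\times J_x \subseteq I\times J\in\J$. I expect two problems here. First, the image lives in a subset of a product index set, which need not itself lie in $\J$. Second, it is unclear whether the relevant test space only takes products, or the closure of $\J$ is adequate. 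If necessary I would restrict $\Prob$'s tensor to the product-test version, or enlarge $\J$ to be closed under subsets of products.

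On states, $\mu$ pulls a state $\omega$ on $T(A\Box B)$ back to $\omega\circ\pi$ on the composite model. I need this pullback to be a non-signalling bi-weight with conditional states in $S(T(A))$ and $S(T(B))$. The marginal condition follows from $\pi(a\otimes u_B)$ summing correctly over tests: the marginal $\sum_y\omega(\pi(a\otimes g(y)))=\omega(\pi(a\otimes u_B))$ is independent of $g$. Restricted to the first factor, it is the state $a\mapsto\omega(\pi(a\otimes u_B))$ of $T(A)$, and conditional states are obtained similarly after normalizing. So the pullback lands in the required state space.

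For naturality in $A$ and $B$, take morphisms $\phi,\psi$ of $\Cat$. By Lemma \ref{lem: channels to morphisms}, $T(\phi),T(\psi)$ induce morphisms $(x,a)\mapsto(x,T\phi(a))$. Naturality of $\pi$, namely $\pi\circ(T\phi\otimes T\psi)=T(\phi\Box\psi)\circ\pi$, then gives naturality of $\mu$ componentwise on outcomes. The unit uses the trivial object: $\Mod^{\J}(\R)$ has only the tests $\{(j,1)\}$, provided $\J$ contains singletons. The associativity and symmetry coherences reduce to those of $\pi$ together with the canonical bijections $(I\times J)\times K\cong I\times(J\times K)$ and $I\times J\cong J\times I$. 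For these to be honest equalities of outcome sets, these bijections must be taken as identifications inside $\J$.

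I expect the main obstacle to be the test-space level checks: showing nonvanishing of product effects, and handling sequential tests whose index sets are not literally products. I would first try to prove nonvanishing from the non-signalling/product-state assumption, and then work with the product-test version of the composite.
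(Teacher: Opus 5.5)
Your core argument is the paper's own (Appendix C). It has the same ingredients: the outcome map $((x,a),(y,b))\mapsto((x,y),\pi(a\otimes b))$, and closure of $\J$ under products so that product tests land in $\Ob(I\times J,\,\cdot\,)$. Bilinearity with $\pi(u_A\otimes u_B)=u_{A\Box B}$ gives the unit. The pulled-back weight $\omega(\pi(a\otimes b))$ has marginals independent of the other factor's test, and its conditionals are normalized positive linear functionals, hence states. Your worry about two-stage tests is moot: the paper's $\M(A)\times\M(B)$ consists of product tests $E\times F$ only, so your fallback is exactly its setting. The paper also simply asserts naturality and asks only for a natural family of non-signalling composites, so your coherence discussion goes beyond what it proves.

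The non-vanishing problem you raise is genuine, and the paper passes over it: ``we first verify that $\phi$ is a bi-interpretation'' comes with no argument. Your first route cannot fix it, because a composite of OUSes need not realize product states. Take a one-object theory with $T(\ast)=\R^2$, unit $(1,1)$, and $\pi(a\otimes b)=s(a)s(b)u$, where $s$ is the first-coordinate state. All the axioms hold, yet $((1,(1,0)),(2,(0,1)))$ is sent to $((1,2),0)$, which is not an outcome of $\M^{\J}(\R^2)$. So you need an extra hypothesis. Injectivity of $\pi$ (local tomography) suffices, since $a\otimes b\neq 0$. Strongness of the composite also suffices, and it is exactly what your product-state argument uses. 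Alternatively, send zero products to the empty event (morphisms in this framework take values in events) and take $\J$ closed under non-empty subsets; a product test then maps onto the graph of an observable indexed by its support in $I\times J$.

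The same defect affects Lemma \ref{lem: channels to morphisms}, which your naturality step relies on. A channel such as $a\mapsto s(a)u_{\F}$ kills non-zero effects, so whatever fix you adopt must be applied there too.

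Finally, a small slip: $\Mod^{\J}(\R)$ is not the trivial model. Every strictly positive probability vector indexed by some $J\in\J$ is a test, so at best you get a non-invertible unit comparison map.
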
 

The proof, along with relevant definitions, is sketched in Appendix C.

\tempout{ 
{\bf Observables} [Where to put this?] Let $\B$ be the Borel test space of a Boolean algebra $B$, and let $\phi : \B \rightarrow \M$ be an interpretation. Then (i) Can regard as an event-valued measure (explain how); (ii) If $\M$ is algebraic, $\phi$ becomes a measure $B \rightarrow \Pi(\M)$; (iii) regardless, $\phi(\B)$ is algebraic, and a refinement 
ideal in $\M$. (iv) Linearizing, we get a measure 
$B \rightarrow \E(\M)$ with range consisting of event-effects (in this sense, sharp). 

An observable $B \rightarrow \E$ gives $B \rightarrow \M^{\J}(\E)$? Well, $\tilde{F}(b) = ?$ ... OK, $B^{\J} \rightarrow \M^{\J}(\E)$, $\tilde{F}(i,b) = (i,F(b))$. Now, for any OA $B$, 
\[\M^{\J}(B) = \{ \{(i,b_i) | i = 1,...,n, \bigoplus b_i = 1\}\]

}

\begin{appendix} 

\section{Sub-order unit spaces} 

This Appendix contains the proof of Lemma \ref{prop: sub-ous}, which we restate in a 
slightly different form below. This is routine, and the result 
doubless well-known, but we have not found a reference for it. 

Let $(A,u)$ be an ordered vector space with order unit $u$.  Write 
$[0,u]$ for the interval of effects in $\A$.  For any set $X \subseteq A$, let $\spn_{+}(X) = \{ \sum_{i} t_{i} x_i | t_{i} \geq 0\}$.

\begin{definition} If $a \in [0,u]$, let 
\[A_{a} := \{ x \in A \mid \exists t > 0 ~ -ta \leq x \leq ta\}\]
and
\[A_{a}^{+} = \{ x \in A \mid \exists t > 0 ~ 0 \leq x \leq ta\}.\]
\end{definition} 

Note that $A_{a}^{+} \subseteq A_{a} \cap A_{+}$. 

\begin{lemma} $A_{a}^{+} = A_{a} \cap A_{+} = \spn_{+}([0,a]) = \{ x \in A_{+} | \exists t \geq 0 ~ x \leq ta$. 
\end{lemma}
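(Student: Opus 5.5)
We prove the four descriptions equal by a cycle of inclusions, using only the cone axioms and the definitions of $A_a$, $A_a^+$. Write $D := \{x \in A_+ \mid \exists t \geq 0 ~ x \leq ta\}$ for the last set in the statement.

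First, $A_a^+ \subseteq A_a \cap A_+$ was already noted. For the reverse inclusion, let $x \in A_a \cap A_+$. Then $x \geq 0$ and $x \leq ta$ for some $t>0$, so $x \in A_a^+$. Thus $A_a^+ = A_a\cap A_+$. The equality $A_a^+ = D$ is also immediate. Clearly $A_a^+ \subseteq D$. Conversely, if $0 \leq x \leq ta$ with $t \geq 0$, then $x \leq (t+1)a$ because $a \geq 0$, so we may take the scalar strictly positive.

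It remains to identify these sets with $\spn_+([0,a])$.

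\emph{Inclusion $A_a^+ \subseteq \spn_+([0,a])$.} Suppose $0 \leq x \leq ta$ with $t>0$. Then $y := t^{-1}x$ satisfies $0 \leq y \leq a$, so $y \in [0,a]$. Hence $x = t y \in \spn_+([0,a])$.

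\emph{Inclusion $\spn_+([0,a]) \subseteq A_a^+$.} Let $x = \sum_i t_i x_i$ with $t_i \geq 0$ and $x_i \in [0,a]$. Since $A_+$ is a convex cone, $x \geq 0$. Also $t_i x_i \leq t_i a$ for each $i$. Adding these inequalities, which is allowed because the order is translation-invariant and $A_+$ is closed under addition, gives $x \leq (\sum_i t_i) a$. If $\sum_i t_i = 0$ then $x = 0 \leq 1\cdot a$; otherwise the scalar $\sum_i t_i$ is already strictly positive. In either case $x \in A_a^+$.

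No step is a real obstacle. The only points requiring care are the edge case $t = 0$ (equivalently, the empty or zero combination), which is handled by enlarging the scalar using $a \geq 0$, and remembering that $[0,a]$ is taken in $A$'s own order. Note that none of this uses the Archimedean property or finite-dimensionality; those will matter only later, when showing $a$ is an (Archimedean) order unit for $A_a$.
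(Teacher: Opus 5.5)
Your proof is correct and follows essentially the same cycle of inclusions as the paper's: rescale by $t^{-1}$ to land in $[0,a]$, then bound a positive combination $\sum_i t_i x_i$ by a positive multiple of $a$. You also verify equality with the fourth set $\{x \in A_+ \mid \exists t \geq 0,\ x \leq ta\}$ and handle the degenerate case where all coefficients vanish, both of which the paper's proof leaves implicit.
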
 

{\em Proof:} Clearly $A_{a}^{+} \subseteq A_{a} \cap A_{+}$. 
If $x \in A_{a} \cap A_{+}$, then  
$-ta \leq x \leq ta$ for some $t \geq 0$, and thus, $0 \leq x \leq ta$. 
We can choose $t > 0$, in which case 
$t^{-1}x \in [0,a]$, and thus $x \in \spn_{+}([0,a])$. Finally, 
if $x \in \spn_{+}([0,a])$, say $x = \sum_{i=1}^{n} t_i x_i$ 
with $0 \leq x_i \leq a$ and $t_i \geq 0$, then certainly $0 \leq x$. 
Let $t =  \max\{t_i\}$ and observe that $x \leq \sum_{i} t a = (nt)a$, 
so $x \in A_{a}^{+}$. 
$\Box$

\begin{lemma} $A_{a} = A_{a}^{+} - A_{a}^{+} = \spn([0,a])$. 
\end{lemma}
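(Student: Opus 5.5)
I will prove the chain of inclusions $A_a \subseteq A_a^{+} - A_a^{+} \subseteq \spn([0,a]) \subseteq A_a$. The middle inclusion is immediate from the preceding Lemma, since $A_a^{+} = \spn_{+}([0,a])$, and so $A_a^{+} - A_a^{+} \subseteq \spn([0,a])$. For the last inclusion, I first note that $[0,a] \subseteq A_a$, because $-a \leq 0 \leq x \leq a$ for every $x \in [0,a]$. It then suffices to check that $A_a$ is a linear subspace. If $-sa \leq x \leq sa$ and $-ta \leq y \leq ta$, adding the inequalities gives $-(s+t)a \leq x+y \leq (s+t)a$. For a scalar $r$, I get $-|r|s\,a \leq rx \leq |r|s\,a$; when $r<0$, multiplying by $r$ reverses both inequalities, but the resulting pair of bounds is symmetric, so it is again of the required form. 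Finally $0 \in A_a$. Hence $\spn([0,a]) \subseteq A_a$.

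The substantive inclusion is $A_a \subseteq A_a^{+} - A_a^{+}$. Given $x \in A_a$, I choose $t > 0$ with $-ta \leq x \leq ta$ and use the standard decomposition
\[ x = \tfrac{1}{2}(ta + x) - \tfrac{1}{2}(ta - x). \]
Both $p = \tfrac12(ta+x)$ and $q = \tfrac12(ta-x)$ are positive by the two-sided bound. Each is also dominated by a multiple of $a$: indeed $ta + x \leq 2ta$ because $x \leq ta$, and $ta - x \leq 2ta$ because $-x \leq ta$. Thus $0 \leq p,q \leq ta$, so $p, q \in A_a^{+}$ by definition, and therefore $x = p - q \in A_a^{+} - A_a^{+}$.

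Putting the three inclusions together gives the equalities asserted in the Lemma. I expect no real obstacle; the only points needing care are the sign reversal when scaling by a negative scalar in the subspace check, and the fact that the previous Lemma is what identifies $A_a^{+}$ with $\spn_{+}([0,a])$, so that its differences span exactly $\spn([0,a])$.
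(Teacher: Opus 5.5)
Your proof is correct and follows essentially the same route as the paper. Both arguments use the preceding Lemma's identification $A_a^{+} = \spn_{+}([0,a])$ and split an $x$ with $-ta \leq x \leq ta$ into a difference of elements of $A_a^{+}$. The differences are minor: you use the symmetric splitting $x = \frac{1}{2}(ta+x) - \frac{1}{2}(ta-x)$ where the paper uses $x = (x+ta) - ta$, and your explicit check that $A_a$ is a linear subspace justifies $\spn([0,a]) \subseteq A_a$, which the paper only asserts.
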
 

{\em Proof:} By the above, $A_{a}^{+} = \spn_{+}([0,a])$.
Since we always have $\spn(X) = \spn_{+}(X) - \spn_{+}(X)$, the 
second equality follows, along with $\spn([0,a]) \subseteq A_{a}$. 
Now let $x \in A_{a}$, say with $-ta \leq x \leq ta$, with $t > 0$. Then 
$0 \leq x + ta \leq 2ta$, so $x + ta =: y \in A_{a}^{+}$, whence 
$x \in \spn_{+}([0,a])$, and thus $x = y + ta \in \spn([0,a])$. $\Box$

\begin{proposition} Let $a \in [0,u]$ as above. Then 
\vspace{-.15in}
\begin{itemize} 
\item[(a)] $(A_{a}, {A^{+}}_{a})$ is an ordered vector space; 
\item[(b)] For all $x, y \in A_{a}$, $x \leq y$ in $A_{a}$ iff 
$x \leq y$ in $A$; 
\item[(c)] $a$ is an order unit for $A_{a}$;
\item[(d)] If $u$ is Archimedean, so is $a$.
\end{itemize} 
\end{proposition}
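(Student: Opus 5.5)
Each of (a)--(d) reduces to elementary cone manipulations, using the two preceding lemmas: $A_a^+ = A_a\cap A_+ = \spn_+([0,a])$ and $A_a = A_a^+ - A_a^+ = \spn([0,a])$.

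For (a), we must show that $A_a$ is a subspace and that $A_a^+$ is a proper, generating cone in it. Since $A_a = \spn([0,a])$, it is a subspace. Since $A_a^+ = \spn_+([0,a])$, it is closed under addition and non-negative scalars. It is proper because $A_a^+ \subseteq A_+$ and $A_+\cap -A_+ = \{0\}$. It is generating because $A_a = A_a^+ - A_a^+$.

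For (b), take $x,y\in A_a$. Then $y-x\in A_a$, and $y - x \in A_a^+$ iff $y-x\in A_a\cap A_+$ by the first lemma, i.e.\ iff $x\le y$ in $A$. So the order on $A_a$ is the restriction of the order on $A$.

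For (c), first note $a\in A_a^+$, since $0\le a\le 1\cdot a$. Given $x\in A_a$, the definition of $A_a$ gives some $t>0$ with $-ta\le x\le ta$ in $A$. By (b), these inequalities also hold in $A_a$, because $ta\pm x\in A_a$. Hence $a$ is an order unit for $A_a$. (If $a=0$ then $A_a=\{0\}$, which is degenerate but harmless.)

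For (d), the Archimedean property is the one step that needs care, since $a$ is in general not comparable to $u$ in the useful direction: we only have $a\le u$. The condition to check is this: if $x\in A_a$ and $x\le \epsilon a$ for every $\epsilon>0$ (in $A_a$, equivalently in $A$ by (b)), then $x\le 0$. The obvious route goes through $u$. From $a\le u$ we get $\epsilon a\le \epsilon u$, so $x\le \epsilon u$ for all $\epsilon>0$, and the Archimedean property of $u$ in $A$ gives $x\le 0$ in $A$. Since $x\in A_a$, (b) then gives $x\le 0$ in $A_a$.

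If the intended notion of ``Archimedean'' is instead closedness of the cone, or the condition that $nx\le a$ for all $n$ implies $x\le 0$, the same transfer argument works. From $nx \le a$ we get $nx\le u$ for all $n$, hence $x\le 0$ in $A$, and therefore $x\le 0$ in $A_a$ by (b). In either formulation, the only real point is that inequalities against multiples of $a$ imply the same inequalities against multiples of $u$, combined with (b).
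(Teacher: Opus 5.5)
Your proposal is correct and follows essentially the same route as the paper: (a)--(c) are read off from the two preceding lemmas exactly as you do. Part (d) is the same transfer argument, using the formulation ``$nx \le a$ for all $n \in \N$ implies $x \le 0$'' together with $a \le u$. Your version is slightly more careful than the paper's, since you explicitly invoke (b) to move the inequalities in (c), and the conclusion $x \le 0$ in (d), back into $A_a$.
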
 

{\em Proof:} (a) Clearly, $A^{+}_{a} \cap -A^{+}_{a} \subseteq A_{+} \cap -A_{+} = \{0\}$, and Lemma 2 shows that $A^{+}_{a}$ spans $A_{a}$. 
So it remains to show that $A_{a}^{+}$ is a convex cone; but this 
is obvious. (b) If $x \leq y \in A_{a}$, then $y - x \in A_{a}^{+} 
\subseteq A_{+}$, so $x \leq y$ in $A$. Conversely, if $x \leq y$ in 
$A$, then $y - x \in A_{+} \cap A_{a} = A_{a}^{+}$ by Lemma 1, 
so $x \leq y$ in $A_{a}$. 
(c) That $a$ is an order unit is immediate from the definition 
of $A_{a}$. (d) Suppose now that $u$ is Archimedean: then 
if $x \in A_{a}$ with $nx \leq a$ in $A_{a}$ for all $n \in \N$, then 
by part (b), $nx \leq a$ in $A_{+}$, so as $a \leq u$, 
$nx \leq u$ in $A$, whence, $x \leq 0$. So $a$ is Archimedean 
in $A_{a}$. $\Box$

\section{Markov Kernels} 

A {\em Markov kernel} $k : S \rightarrow T$ from a measurable space $S = (S,\Sigma)$ to a measurable space $T = (T,\Xi)$ is a measurable function 
\[k : S \times \Xi \rightarrow [0,1]\]
such that 
\begin{itemize} 
\item[(a)] $k(~\cdot~, b)$ is a bounded measurable function 
on $S$ for every $b \in \Xi$, and 
\item[(b)] $k(s, ~\cdot~)$ is a countably-additive probability measure on $\Xi$ 
for every $s \in S$. 
\end{itemize}  

Let $B(S) = B(S,\Sigma)$ be the OUS consisting of bounded 
measurable real-valued functions on $S$, ordered pointwise, 
with order-unit $1$. For every $b \in \Xi$, 
let $\hat{k}(b) \in B(S)$ be the function defined by 
$\hat{k}(b)(s) = k(s,b)$ for all $s \in S$. 
Then (b) tells us that $k^{\ast}$ is a positive, countably-additive $B(S)$-valued measure on $\Xi$, with $k^{\ast}(T) = 1$. Note that countable additivity is with respect to the product topology 
on $B(S) \leq \R^{S}$, i.e., pointwise.  
For any measurable space $(S,\Sigma)$, there is a duality between $B(S,\Sigma)$ and the space $M_{\sigma}(S,\Sigma)$ 
of countably-additive finite measures on $(S,\Sigma)$, given by 
\[\langle \mu, f \rangle := \int_{S} f(s) \,d\mu(s),\]
under which $B(S) \simeq M_{\sigma}(S,\Sigma)^{\ast}$ \cite{Dunford-Schwartz}. Thus, for any measure $\mu \in M_{\sigma}(M,\Sigma)$, 
we have a countably additive measure $b \mapsto \langle \mu, \hat{k} (b) \rangle$ 
on $(T,\Xi)$. For any $g \in B(T)$ and $\mu \in M_{\sigma}(S,\Sigma)$ we can therefore define  
\[\hat{k}^{\ast}(g)(\mu) := \int_{T} g \,d\langle \mu, \hat{k} \rangle.\]
This gives us a positive linear mapping 
\[\hat{k}^{\ast} : B(T) \rightarrow B(S).\]
Then 
\begin{equation} 
\bar{k}^{\ast}(f)(s) = \langle \delta_{s}, \bar{k}^{\ast}(f)\rangle  = 
\int_{T} f d\delta_{x} \circ k^{\ast} 
= \int_{T} f(t) d k(s,~dt~).\end{equation}

Markov kernels $k : S \rightarrow T$ and $j : T \rightarrow U$ compose according to 
\[(j \circ k)(s,W) = \int_{T} j(t,W) d k(s,~\cdot~) .\]
This gives a category, $\Markov$. (Giorghiu and Heunnen 
call this {\bf SRel}, presumably for ``Stochastic relations".) But by (\theequation) above, this is simply
\[k^{\ast}( j^{\ast}(W)(\delta_{x})) = 
(k^{\ast} \circ j^{\ast})(W)(s).\]
That is, $(j \circ k)^{\ast} = k^{\ast} \circ j^{\ast}$. So 
we have a contravariant functor $\Markov \rightarrow \OUS$ 
taking each measurable space $S$ to $B(S)$, and 
each markov kernel $k : S \rightarrow T$ to the mapping $k^{\ast} : B(T) \rightarrow B(S)$. This gives us a 
convex-operational theory $\Markov^{\op} \rightarrow \OUS$. 
Composing with $\Mod^{\J}$, we obtain a probabilistic theory $\Markov^{\op} \rightarrow \Prob$. 

\section{Monoidality} 

In Section 5, we sketched the definition of a non-signalling composite of two order-unit spaces. This mirrors the definition of a non-signalling composite of probabilistic models given in \cite{Wilce-GPT}. We will briefly review this. If $\M$ and $\Ntest$ are test spaces, we write $\M \times \Ntest$ for the test space consisting of tests of the form $E \times F$, where $E \in \M$ and $F \in \Ntest$. The interpretation is that one can perform tests $E \in \M$ and $F \in \Ntest$ independently and collate the results as an ordered pair $(x,y)$ of outcomes $x \in E$ and $y \in F$. A probability weight on $\M \times \Ntest$ is {\em non-signalling} iff the marginal weights $\omega_{1}(x) = \sum_{y \in F} \omega(x,y)$ and $\omega_2(y) = \sum_{x \in E} \omega(x,y)$ are independent of $F$ and $E$, respectively.  Such a weight has well-defined {\em conditional} weights, defined by $\omega_{2|x}(y) = \omega(x,y)/\omega_1(x)$ and similarly for $\omega_{1|y}$. If $A$ and $B$ are probabilistic models, we say that a non-signalling probability weight $\omega$ $\M(A) \times \M(B)$ is a non-signalling {\em joint state} of $A$ and $B$  iff the conditional states $\omega_{1|y}$ and $\omega_{2|x}$ belong to $\Omega(A)$ and $\Omega(B)$, respectively (in which case, the marginal states do also). We define $A \times_{NS} B$ to be the model having test space $\M(A) \times \M(B)$ and state space consisting of all joint non-signalling states on $A$ and $B$. Then we make the following 

\begin{definition} A {\em non-signalling composite} of probabilistic models $A$ and $B$ is a pair $(C,\pi)$ where $\pi : A \times_{NS} B \rightarrow C$ is a test-preserving morphism of models. 
\end{definition} 

Thus, $\pi(x,y)$ is an event of $C$ representing the joint occurrence of $x$ and $y$ in measurements on $A$ and $B$, respectively, and $\pi^{\ast}(\gamma)(x,y) = \gamma(\pi(x,y))$ is a joint state of $A \times_{NS} B$. A {\em product state} on $\M(A) \times \M(B)$ is a joint 
state of the form $\alpha \otimes \beta$ where $\alpha \in \Omega(A)$ and $\beta \in \Omega(B)$ and $(\alpha \otimes \beta)(x,y) = \alpha(x) \beta(y)$. A convex combination of product states is said to be {\em separable}; a non-signaling state that is not separable is {\em entangled}. 

We say that $(C,\pi)$ is {\em strong} iff for every $\alpha \in \Omega(A)$ and $\beta \in \Omega(B)$, there exists some $\gamma \in \Omega(C)$ with $\pi^{\ast}(\gamma) = \alpha \otimes \beta$, and {\em locally tomographic} iff $\pi^{\ast}$ is injective.

One can show that $\V(A \times_{NS} B) \simeq \V(A) \maxtensor \V(B)$. If $\pi^{\ast}$ is injective. If $(C,\pi)$ is locally tomographic, we can regard $\V(C)_+$ as a cone in $\V(A) \otimes \V(B)$ contained in the the maximal tensor cone. $(C,\pi)$ is strong iff $\pi^{\ast}(\V(C)_+)$ contains the minimal tensor cone.

Suppose now that $\J$ is a collection of finite sets closed under the formation of Cartesian products,  and let $\E$, $\F$, $\G$ denote OUSes. If $\phi : \E \times \F \rightarrow G$ is a positive, unit-preserving bilinear form, we can define a positive, 
outcome- and test-preserving morphism 
 
$\tilde{\phi} : \M^{\J}(\E) \times \M^{\J}(\F) \rightarrow \M^{\J}(\G)$ by setting 
\begin{equation} 
\tilde{\phi}((x,a),(y,b)) = ((x,y),\phi(a,b))\end{equation} 
for any $x, y \in \bigcup \J$, and any effects $a \in E$ and $b \in F$. 
Note here that $(x,y) \in \bigcup \J$ by the standing assumption 
that $\J$ is closed under Cartesian products. 

\begin{proposition}\label{prop: monoidality} $(\M^{\J}(\G), \tilde{\phi})$ is a non-signalling composite of $\M^{H}(\E)$ and $\M^{\J}(\F)$. 
\end{proposition}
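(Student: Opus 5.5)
By the definition of a non-signalling composite in Appendix C, I need to show that $\tilde{\phi}$ is a test-preserving morphism of probabilistic models from $\Mod^{\J}(\E) \times_{NS} \Mod^{\J}(\F)$ into $\Mod^{\J}(\G)$. This splits into a test-space part and a state part.

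\textbf{Test-space part.} A test of $\M^{\J}(\E) \times \M^{\J}(\F)$ has the form $G(f) \times G(g)$, where $f : I \rightarrow (0,u_{\E}]$ and $g : J \rightarrow (0,u_{\F}]$ are observables with $I, J \in \J$.

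\begin{itemize}
\item Define $h : I \times J \rightarrow \G$ by $h(x,y) = \phi(f(x),g(y))$.
\item Bilinearity and unit preservation give
\[\sum_{(x,y)} h(x,y) = \phi\Big(\sum_x f(x), \sum_y g(y)\Big) = \phi(u_{\E},u_{\F}) = u_{\G}.\]
\item Positivity of $\phi$ gives $h(x,y) \geq 0$.
\item Since $I \times J \in \J$ by the standing assumption, $h$ is an $I \times J$-indexed partition of unity.
\item By construction $\tilde{\phi}(G(f) \times G(g)) = G(h)$.
\item $\tilde{\phi}$ is injective on each test, because the first coordinate $(x,y)$ already determines the outcome. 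Hence $\tilde{\phi}$ is locally injective, outcome-preserving and test-preserving, as in Lemma \ref{lem: channels to morphisms}.
\end{itemize}

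There is one caveat. An observable may not take the value $0$, but $\phi(a,b)$ could vanish for nonzero $a, b$ when $\phi$ is not faithful on products. I would handle this either by assuming the composite satisfies $\phi(a,b) \neq 0$ for $a,b \neq 0$, which holds in the locally tomographic case since $a \otimes b \neq 0$, or by noting that the paper's definition of $G(f)$ already discards zero values. I would flag this explicitly rather than prove it away.

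\textbf{State part.} I must show that $\tilde{\phi}^{\ast}$ maps $\Omega(\Mod^{\J}(\G)) = S(\G)$ into the non-signalling joint states.

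\begin{itemize}
\item For $\gamma \in S(\G)$, set $\omega(x,a,y,b) = \gamma(\phi(a,b))$.
\item The marginal over a test $G(g)$ is $\sum_y \gamma(\phi(a,g(y))) = \gamma(\phi(a,u_{\F}))$, which is independent of $g$; symmetrically for the other side. So $\omega$ is non-signalling.
\item For the conditionals, fix $(y,b)$ with $\omega_2(y,b) = \gamma(\phi(u_{\E},b)) > 0$. The conditional is $a \mapsto \gamma(\phi(a,b))/\gamma(\phi(u_{\E},b))$.
\item This is the restriction to $(0,u_{\E}]$ of the functional $\gamma \circ \phi(\,\cdot\,,b)$, which is linear and positive and, after normalization, a state on $\E$.
\item It therefore lies in $\Omega^{\J}(\E) \cong S(\E)$, the weight $\alpha_{f}$ with $f$ this normalized state. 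Likewise for $\omega_{2|x}$.
\end{itemize}

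These two steps show that $\tilde{\phi}$ is a test-preserving morphism of models, and hence that $(\M^{\J}(\G), \tilde{\phi})$ is a non-signalling composite.

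The only real obstacle is the nonvanishing issue in the test-space part; everything else is routine bookkeeping from bilinearity and positivity. I would also remark that the construction is natural in channels, which is what Theorem \ref{thm: monoidality} ultimately needs.
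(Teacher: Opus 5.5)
Your argument is the paper's argument. Test preservation comes from bilinearity and $\phi(u_{\E},u_{\F})=u_{\G}$, non-signalling from $\sum_x f(x)=u_{\E}$, and the conditionals are normalizations of the positive linear functional $\gamma(\phi(\,\cdot\,,b))$. The one difference is that you actually carry out the test-space verification, which the paper only announces (``we first verify that $\phi$ is a bi-interpretation'').

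In doing so you found a real defect that the paper's proof passes over. Take $\G=\R$ and $\phi(a,b)=\alpha(a)\beta(b)$, where $\alpha,\beta$ are states and $\alpha$ vanishes on some nonzero effect $a$. Then any outcome $((x,a),(y,b))$ is sent to $((x,y),0)$, which is not an outcome of $\M^{\J}(\G)$. So the proposition as stated needs an extra hypothesis.

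Your first remedy is the right one: require $\phi(a,b)\neq 0$ for nonzero effects, which holds automatically under local tomography. Your second remedy does not work by itself, because dropping zero values from $G(h)$ still leaves $((x,a),(y,b))$ with no image. You would also have to let $\tilde{\phi}$ send such outcomes to the empty event, and then $\tilde{\phi}$ is no longer outcome-preserving as the paper claims. You would further need the support of $h$, now a proper subset of $I\times J$ that need not lie in $\J$, to index a test of $\M^{\J}(\G)$.
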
 

{\em Proof:}  We first verify that $\phi$ is a bi-interpretation. 
Hence, if $\omega$ is a state on $G$, $\mu = \tilde{\phi}^{\ast}(\tilde{\omega})$ is probability weight on $\M^{\J}(\E) \times \M^{\J}(F)$. Note that 
\[\mu((x,a),(y,b)) = \tilde{\omega}((x,y),\phi(a,b)) = 
\omega(\phi(a,b))\]
by the definition of $\tilde{\omega}$. But this last is 
a bilinear form on $E \times F$. 

We now need to show that $\mu$ is non-signalling. If 
$f \in \M^{\J}(\E)$, say $f\in \Ob(I,\E)$ for some $I \in \J$, and 
if  $(y,b)$ is any outcome of $\M^{\J}(\F)$, say with 
$y \in J \in \J$ and $b = g(y)$ for some $g \in \Ob(J,\F)$, then 
by the above we have 
\[\sum_{(x,a) \in f} \mu((x,a),(y,b)) = \sum_{x \in I} 
\omega(\phi(f(x),b)) = \omega(\phi(\sum_{x \in I} f(x), b))
= \omega(\phi(u_E, b))\]
which is independent of $f$. Similarly, 
$\sum_{(y,b) \in g} \mu((a,x),(y,b))$ is independent of $g$. 

It remains to show that $\mu$'s conditional states 
lie in the state spaces of $E$ and $F$, respectively. For 
this, it suffices to show that 
\[(y,b) \mapsto \mu((x,a),(y,b)) = \omega(\phi(a,b))\]
belongs to the sub-normalized positive cone of $F^{\ast}$. 
But this only requires that it have the form 
$(y,b) \mapsto \psi(y,b)$ for some linear functional $\psi \in F^{\ast}$, which 
is clear from the linearity of $\omega$ and bilinearity 
of $\phi$. $\Box$ 

Suppose now that $(\Cat, \Box)$ is a symmetric monoidal category 
and $T : \Cat \rightarrow \OUS$ is a convex operational theory, 
i.e., an injective-on-objects functor from $F$ to $\OUS$ plus 
a natural transformation $\pi$ with bilinear components 
$\pi_{A,B} : F(A) \mintensor T(B) \rightarrow T(A \Box B)$ 
making $F(A \Box B)$ a nonsignalling composite of $T(A)$ and $T(B)$. 
For each $A, B \in \Cat$, let $\pi^{b}_{A,B}$ be the positive 
bilinear mapping $T(A) \times T(B) \rightarrow T(A \Box B)$ 
obtained by restricting $\pi$ to pure tensors. 
Then we have a functor 
$F^{\J} = \Mod^{\J} \circ T : \Cat \rightarrow \Prob$ 
and a natural transformation $\pi^{\J}$ with components 
\[\pi^{\J}_{A,B} = \tilde{\pi^{b}_{A,B}} : \Mod^{\J}(T(A)) \times_{NS} \Mod^{\J}(T(B)) \rightarrow \Mod^{\J}(T(A \Box B))\]
as defined in Equation (\theequation). 
By Proposition \ref{prop: monoidality}, this makes 
$\Mod^{\J}(T(A \Box B))$ a nonsignalling composite of 
$\Mod^{\J}(T(A))$ and $\Mod^{\J}(T(B))$. Thus, we have


\begin{theorem}[Theorem \ref{thm: monoidality}, {\em bis}] 
Let $\J$ be closed under finite Cartesian products. If $T : \Cat \rightarrow \OUS$ is any monoidal convex operational theory,  then $\Mod^{~\J} \circ T : \Cat \rightarrow \Prob$ is a monoidal probabilistic theory.
\end{theorem}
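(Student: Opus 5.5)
The plan is to assemble the monoidal structure on $\Mod^{\J}\circ T$ from three ingredients: functoriality of $\Mod^{\J}$, Proposition \ref{prop: monoidality}, and naturality of $\pi$.

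First, check that $F^{\J}=\Mod^{\J}\circ T$ is a functor $\Cat\to\Prob$. On objects this is clear. For a morphism $f:A\to B$ in $\Cat$, $T(f)$ is a channel $T(A)\to T(B)$. By Lemma \ref{lem: channels to morphisms}, $(x,a)\mapsto (x,T(f)(a))$ is a test-preserving morphism $\M^{\J}(T(A))\to\M^{\J}(T(B))$; it lands in $\M^{\J}$ rather than $\M^{o}$ because the index set $J$ is unchanged. It is a morphism of models because pulling back a state $\alpha_g$, $g\in S(T(B))$, gives $\alpha_{g\circ T(f)}$, and $g\circ T(f)$ is again a state since $T(f)$ is positive and unit-preserving. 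Identities and composites are preserved because $T$ preserves them and the construction acts only on the effect coordinate.

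Second, for each pair $A,B$, take the bilinear restriction $\pi^{b}_{A,B}$ of $\pi_{A,B}$. It is positive on pairs of positive elements and unit-preserving, since $(T(A\Box B),\pi_{A,B})$ is a composite. Proposition \ref{prop: monoidality} then says that $\pi^{\J}_{A,B}=\tilde{\pi^{b}_{A,B}}$ makes $\Mod^{\J}(T(A\Box B))$ a non-signalling composite of $\Mod^{\J}(T(A))$ and $\Mod^{\J}(T(B))$. I will take that proposition as given. The one point I would recheck is that $\tilde\phi$ is locally injective on product tests $G(f)\times G(g)$. This holds because the index pairs $(x,y)$ are distinct, and the resulting graph lies in $\Ob(I\times J,\cdot)$ with $I\times J\in\J$.

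Third, verify naturality of $\pi^{\J}$. Take $f:A\to A'$ and $g:B\to B'$. The outcome $((x,a),(y,b))$ goes along one path to $((x,y),T(f\Box g)(\pi^{b}_{A,B}(a,b)))$ and along the other to $((x,y),\pi^{b}_{A',B'}(T(f)a,T(g)b))$. These agree exactly by naturality of $\pi$. Then supply the unit and the associativity/symmetry coherence. For the unit, take $\{\ast\}\in\J$ (or a one-point set assumed present) and $\Mod^{\J}(T(I))$. For associativity, the index identification $(I\times J)\times K\cong I\times(J\times K)$ is matched with the associator of $\Cat$ via $T$. The coherence diagrams then reduce, componentwise on indices and effects, to the corresponding coherence of $(T,\pi)$ together with the canonical bijections of Cartesian products.

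I expect the coherence step to be the main obstacle. Strictly, $(I\times J)\times K\neq I\times(J\times K)$ as sets, so the associator on the $\Prob$ side must relabel outcomes. This requires $\J$ to be closed under these rebracketings, and the symmetry likewise needs $J\times I$ alongside $I\times J$. I would handle this by defining the structural isomorphisms of $\Prob$ on product models as these relabelings and checking that they are test-preserving isomorphisms. Everything else is bookkeeping that is already implicit in the discussion preceding the theorem.
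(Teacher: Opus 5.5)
Your first three steps are the paper's proof. The paper forms $\Mod^{\J}\circ T$, takes $\pi^{\J}_{A,B}=\tilde{\pi^{b}_{A,B}}$, cites Proposition~\ref{prop: monoidality}, and stops; it asserts naturality without checking it. Your naturality computation from naturality of $\pi$ is correct. So is your remark that Lemma~\ref{lem: channels to morphisms} really lands in $\M^{\J}$ rather than $\M^{o}$. For what the paper actually verifies (a functor plus a natural family of non-signalling composites), your argument matches it, and the naturality check is a genuine addition.

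Your coherence paragraph, which the paper never attempts, contains a step that fails. Closure of $\J$ under rebracketing and under $J\times I$ is automatic from closure under products, so that is not the issue. The issue is where the mismatch lives. Chasing $(((x,a),(y,b)),(z,c))$ around the associativity square gives
\[
\bigl(((x,y),z),\ T(\alpha_{A,B,C})\,\pi^{b}(\pi^{b}(a,b),c)\bigr)
\quad\mbox{versus}\quad
\bigl((x,(y,z)),\ \pi^{b}(a,\pi^{b}(b,c))\bigr).
\]
These are two distinct outcomes of the codomain $\Mod^{\J}(T(A\Box(B\Box C)))$, even when the effect coordinates agree.
\begin{itemize}
\item[(a)] The indices are produced by the components of $\pi^{\J}$, followed by $\Mod^{\J}(T(\alpha_{A,B,C}))$, which never touches the index coordinate. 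Redefining the associator of $\times_{NS}$ as a relabeling cannot reconcile them, because that associator acts on the domain, not on the codomain where the mismatch sits.
\item[(b)] Symmetry fails the same way, even with $I=J$. $\Mod^{\J}(T(\sigma_{A,B}))$ applied to $\pi^{\J}_{A,B}((x,a),(y,b))$ has index $(x,y)$, whereas $\pi^{\J}_{B,A}((y,b),(x,a))$ has index $(y,x)$.
\item[(c)] For the unit, $\Mod^{\J}(T(I))$ is not the trivial one-outcome model even if $T(I)=\R$: it has a test for every strictly positive probability vector indexed by a member of $\J$. So you get at best a lax unit map, and the unit triangle fails for the same reason ($(\ast,x)$ versus $x$).
\end{itemize}
Strict monoidal-functor coherence is therefore false on the nose. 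The statement can only be read in the weaker sense the paper's proof establishes, or with coherence taken up to a re-indexing of outcomes inside the $\Mod^{\J}$ models. Such a re-indexing is not of the form $\Mod^{\J}(T(h))$.

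A further point is inherited from the paper, and your ``recheck'' of $\tilde\phi$ and your functoriality step both pass over it. They assume $T(f)(a)\neq 0$ and $\pi^{b}(a,b)\neq 0$ for non-zero effects $a,b$. A channel can kill a non-zero effect; for example $(s,t)\mapsto s$ from $\R^{2}$ to $\R$ sends the observable $((\tfrac12,0),(0,1),(\tfrac12,0))$ to $(\tfrac12,0,\tfrac12)$.
\begin{itemize}
\item[(a)] In that case $(x,T(f)(a))$ is not an outcome.
\item[(b)] Sending it to $\emptyset$ leaves the graph of an observable on a proper subset $J'\subsetneq J$. That graph lies in a test only if $J'\in\J$, which fails for $\M^{o}$ with $J'=\{1,3\}$.
\end{itemize}
So you need either faithful channels and non-degenerate composites, or $\J$ closed under non-empty subsets.
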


\section{Rolling Dice} 

Here is another approach to understanding arbitrary 
discrete observables in operational terms. The idea is simple:
if $A \subseteq E \in \M$ is an event, $a \in \E$ is an effect, and for every $x \in A$, 
$F(x) = p_x a$ for some $p_x > 0$, then we can simulate $F$'s behavior on $A$ by 
performing the test $E$, recording $A$ if an outcome in $A$ occurs, and then ``rolling a die'' 
--- specifically, an $|A|$-sided die, with faces labeled by the outcomes of $A$, weighted so that side ``$x$'' turns up with probability $p_x$. 

{\bf Coarsenings, Dacey sums, and graphs} 
To formalize this, it will be helpful to recall two basic constructions with test spaces. First, if $\M$ is any test space, $\M^{\#}$ denotes its {\em coarse-graining}: the test space consisting of finite partitions of tests $E \in \M$ by non-empty events. Thus, 
every non-empty event of $\M$ is an {\em outcome} of $\M^{\#}$. Probability weights on $\M$ lift to probability weights on $\M^{\#}$ in an obvious way. We also have a canonical morphism $\M^{\#} \rightarrow \M$, taking event $A$ {\em qua} outcome to $A$, {\em qua} event. (Strictly speaking, this is just the inclusion mapping from the set of non-empty events to that of all events of $\M$.) 

Secondly, if $\M$ is a test space with outcome-set $X$, 
and $\{\Ntest_{x}\}$ is any $X$-indexed collection of test spaces, the {\em Dacey sum} (or {\em fibred product}) of 
the latter over the former is the test space $\Dacey^{\M}_{x \in X} \Ntest_{x}$ consisting of two-stage tests of the form 
\[\bigcup_{x \in E} \{x\} \times F_{x}\]
where $E \in \M$ and, for each $x \in E$, $F_{x} \in \Ntest_{x}$. Operationally, to perform such a test, one 
performs the initial test $E$; upon obtaining outcome $x$, 
one then performs the test $F_{x} \in \Ntest_x$ and, if 
this yields outcome $y$, records the pair $(x,y)$ as the 
outcome of the entire two-stage test. 

An important special case of this is the {\em graph} of a 
morphism $\phi : \M \rightarrow \Ntest$. This is the test-space  $\Graph(\phi)$ consisting of sets of the form 
\[\bigcup_{x \in E} \{x\} \times \phi(x)\]
where $E \in \M$. Note that here we consider the {\em event} 
$\phi(x)$ as the outcome-set of a classical test space. That 
is, $\Graph(\phi)$ is the Dacey sum $\Dacey^{\M}_{x \in X} \{\phi(x)\}$. 

{\bf The Dacey cover} We can apply the graph construction to the canonical morphism $\phi : \M^{\#} \rightarrow \M$. That is, for every event $A \in \mathcal{E}(\M)$, we can regard $A$ as the outcome-set of  the classical test space $\{A\}$ --- the ``$|A|$-sided die" mentioned 
earlier --- and form the Dacey sum 
\[\Dacey(\M) := \Graph(\phi) = \Dacey^{\M^{\#}}_{A \in \bigcup \M^{\#}} \{A\} \times A.\]
An outcome of $\Dacey(\M)$, then, is a pair $(A,x)$ where $A$ is a non-empty event and $x \in A$. A test in $\Dacey(\M)$ is a set $\{(A_i,x_i)\}$ of such pairs such that $\{A_i\}$ partitions a test $E \in \M$. Operationally: perform the coarse-grained test $\{A_i\}$; upon obtaining the {\em outcome} $A_i$, perform the classical test $A_i$ to obtain an outcome in $E$. We will refer to $\Dacey(\M)$ as the {\em Dacey cover} of $\M$. 


If we like, we can understand the classical test $A_i$ as simply ``remembering" which outcome of $E = \bigcup A_i$ actually occurred, 
but we prefer here to imagine that, when we perform 
the coarse-grained test $\{A_i\}$, all information about the 
``actual" outcome is lost: we only seem some indication that 
one of the events $A_i$ occurred. In this case, we literally 
follow this indication with a second experiment, in which we 
(somehow) obtain an outcome in $A_i$ --- say, by rolling the 
aforesaid die. To emphasize this point of view, we'll write 
$\D_{A}$ for the classcal test space $\{A\}$. 

There are canonical morphisms $\pi : \Dacey(\M) \rightarrow \M$ 
and $\psi : \M^{\#} \rightarrow \Dacey(\M)$ given by $\pi(A,x) = x$ 
and $\psi(A) = \{(A,x) | x \in A\}$, respectively.  These are both test-preserving; $\psi$ is injective, but not outcome-preserving, 
while $\pi$ is outcome-preserving, but not injective. 
If $\phi : \M^{\#} \rightarrow \M$ is the canonical 
morphism described above, then $\phi = \pi \circ \psi$ 
\[
\begin{tikzcd} 
\M^{\#} \arrow[dd, "\phi"]
 \arrow[rr, "\psi"] 
& & \Dacey(\M) \arrow[ddll,  "\pi"]\\
& & \\
\M & & 
\end{tikzcd} 
\]
It is straightforward that a probability weight $\beta$ on $\Dacey(A)$ has the form 
\[\beta(A,x) = \alpha(A) \beta_{A}(x)\]
where $\beta_{A}(x)$ is simply a chosen probability weight on the ``die'' ${\mathcal D}_{A}$. 

{\bf De-randomizing an $\E$-valued weight} 
Let $\E$ be an OUS, and suppose $F$ is an $\E$-valued weight  $\M$.  This extends to an $\E$-valued weight $F^{\#} : \M^{\#} \rightarrow \E$ by setting 
$F^{\#}(A) = \sum_{x \in A} F(x)$ for all non-empty events $A$. In general, we won't distinguish between $F$ and $F^{\#}$, i.e., we adopt the convention that for an $\E$-valued weight  on $\M$, and an event $A$ of $\M$, $F(A)$ always {\em means} $\sum_{x \in A} F(x)$ (rather than, say, $\{ F(x) | x \in A\}$.)  Also, if $\M$ and $\Ntest$ are test spaces, $\psi : \M \rightarrow \Ntest$ is a morphism, and $F : \Ntest \rightarrow \E$ is an 
$\E$-valued weight on $\Ntest$, we define a state $F \circ \psi$ on $\M$ by setting 
\[(F \circ \psi)(x) = F(\psi(x)) = \sum_{y \in \psi(x)} F(y).\]

Let $\approx$ be the equivalence relation on $[0,u]$ defined by $a \approx b$ iff for some $t > 0$, $a = tb$. This pulls back along $F$ to an equivalence relation on $X = \bigcup \M$, namely $x \approx y$ iff $F(x) \approx F(y)$ for $x,y \in X$.  

\begin{definition} 
Let $R \subseteq X/\approx \times \M$ be the set of all pairs $(C,E)$, where $C \subseteq X$ is an equivalence class under $\approx$ and $E$ is a test, such that $C \cap E \not = \emptyset$. For each $E \in \M$, we 
now have a partition (a coarse-graining) defined by 
\[\hat{E} := \{ C \cap E | (C,E) \in R\}.\] 
Let $\hat{\M} \subseteq \M^{\#}$ be the collection of all such partitions $\hat{E}$ 
with $E \in \M$.    
\end{definition} 

The construction of $\hat{\M}$ depends on the $\E$-valued weight $F$, 
so whenever there's a possibility of ambiguity, we write $\hat{\M}_{F}$.  

Let $(C,E) \in R$, and let $x_o \in C \cap E$. Then for each $x \in C \cap E$, there exists a unique $t(x) > 0$ with $F(x) = t(x) F(x_o)$ (in particular, $t(x_o) = 1$). Note that 
this depends only on $x$ and $x_o$. Now 
\[F(C \cap E) = \sum_{x \in C \cap E} F(x) = \left (\sum_{x \in C \cap E} t(x)\right )F(x_o).\]
Setting $a_{C,E} = F(C \cap E) \in \E$ and $T_{C,E} = \sum_{x \in C \cap E} t(x) > 0$, we have $a_{C,E} = T_{C,E} F(x_o)$, so 
\[F(x_o) = \frac{1}{T_{C,E}} a_{C,E}\]
and thus, for all $x \in C \cap E$, 
\[F(x) = \frac{t(x)}{T_{C,E}} a_{C,E}.\]
Notice that this shows that the fraction $t(x)/T_{C,E}$ is independent of the choice of 
$x_o$. 

{\em Notation:} From now on, for any equivalence class $C$, test $E$, and outcome $x \in C \cap E$, we write $p_{C,E}(x)$ for $t(x)/T_{C,E}$, noting that this defines a probability 
weight, in the classical sense, on the die $\D_{C \cap E}$.  

We now have a natural $\E$-valued weight on 
$\hat{\M}$, namely 
\[\hat{F}(C \cap A) := F(C \cap A) = \sum_{x \in C \cap E} F(x) =: a_{C,E},\]
 This is locally injective, and thus, a morphism $\hat{\M} \rightarrow \Dacey(\E)$. 

\begin{definition} Let $\Dacey_{F}(\M)$ be the subset of $\Dacey(\M)$ 
consisting of two-stage tests with initial test in $\hat{\M}$. 
Equivalently:  $\Dacey_{F}(\M)$ is the graph of the restriction of $\phi$ to $\hat{\M}$. 
\end{definition} 

 Given a state $\alpha \in S(\E)$, we can define a probability weight $\tilde{\alpha}$ on 
 $\hat{\M}$ by setting $\hat{\alpha} = \hat{F}^{\ast}(\alpha)$; that is, if  $C \cap E \in \bigcup \hat{\M}$, we have 
 \[\hat{\alpha}(C \cap E) = \alpha(a_{C,E}).\]
 We now define a further probability weight $\tilde{\alpha}$ on $\Dacey_{F}(\M)$: 
 \[\tilde{\alpha}(C \cap E, x) = \hat{\alpha}(C \cap E) p_{C,E}(x).\]
 Note that the second-stage probability weight $p_{C,E}$ depends only on $C$ and $E$,  and not at all on the state $\alpha$.  Moreover, we have 
 \[\alpha(x) = \tilde{\alpha}(C \cap E, x),\]
 so we see that the probability of obtaining $x$ in an execution of $E$ depends on 
 the probability of obtaining $C \cap E$ --- which does depend on $\alpha$ --- followed by the chance of selecting $x \in C \cap E$ through a randomization process that has nothing to do with $\alpha$.  

\tempout{  
There is a natural morphism $\pi : \Dacey_{F}(\M) \rightarrow \M$, given by 
\[\pi((C \cap E), x) = x\]
and an $\E$-valued weight $\tilde{F}$ on $\Dacey_{F}(\M)$, given by $\tilde{F}((C \cap E), x) = F(x)$. 
This gives us the following diagram: 
\[
\begin{tikzcd} 
\hat{\M} 
\arrow[rr, shift right=.7ex,"\psi"] 
\arrow[dd, shift left=.75ex, "\phi"]   
\arrow[ddrr, dashed, "\hat{F}", near start]
& & 
\Dacey_{F}(\M)
 \arrow[dd, "\tilde{F}"] \arrow[ddll, "\hspace{-.07in} \pi", near start]\\
& & \\
\M 
\arrow[rr, "F"] 
& &  \E
\end{tikzcd}
\]
The bottom-left triangle commutes by definition of $\hat{F}$, and the top-left and bottom-right triangles commute by definition of $\hat{F}$ and $\tilde{F}$, respectively. Hence, do the square and the top-right triangle also commutes. 
 
The range of the morphism $\hat{F}$ lies in a smaller test space than $\Dacey(\F)$: 

\begin{definition} 
Let $\Dacey_{o}(\E)$ denote the sub-test space of $\Dacey(\E)$ consisting of unordered partitions of unity containing no distinct but equivalent elements. 
\end{definition} 

The upshot, then: Any $\E$-valued weight on $\M$ can be explained by a morphism 
$\hat{F} : \hat{\M} \rightarrow \Dacey_o(\E)$, followed by state-independent classical randomization.  A perhaps more informative diagram is the following: 
\[
\begin{tikzcd} 
\hat{\M} 
\arrow[rr, shift right=.7ex,"\psi"] 
\arrow[dd, shift left=.75ex, "\hat{F}"]   
& & 
\Dacey_{F}(\M)
 \arrow[dd, "\tilde{F}"] \arrow[dr, "\pi"]
 & \\ 
& & & \M \arrow[dl, "F"]\\
\Dacey_o(\E)
\arrow[rr, "G"] 
& &  \E &  
\end{tikzcd}
\]
where $G$ is the obvious $\E$-valued weight on $\D_o(\E)$.  We can think of $\hat{F}$ as {\em de-randomizing} $F$; passing to $\hat{F}$ adds back the necessary randomization to recover $F$. 
}

\end{appendix}

\end{document}